\newtheorem{Lemma}{Lemma}
\newtheorem{Prop}{Proposition}
\newtheorem{Theorem}{Theorem}
\newtheorem*{Theorem*}{Theorem}
\newtheorem{Def}{Definition}
\newtheorem{Exa}{Example}
\newtheorem{assumption}{H\!\!}
\pgfplotsset{compat=1.3}
\definecolor{lavander}{cmyk}{0,0.48,0,0}
\definecolor{violet}{cmyk}{0.79,0.88,0,0}
\definecolor{burntorange}{cmyk}{0,0.52,1,0}
\definecolor{asuorange}{rgb}{1,0.699,0.0625}
\definecolor{asured}{rgb}{0.598,0,0.199}
\definecolor{asuborder}{rgb}{0.953,0.484,0}
\definecolor{asugrey}{rgb}{0.309,0.332,0.340}
\definecolor{asublue}{rgb}{0,0.555,0.836}
\definecolor{asugold}{rgb}{1,0.777,0.008}
\title{Graph Learning with Network Games Prior}
    \def\multilimits@{\bgroup
  \Let@
  \restore@math@cr
  \default@tag
 \baselineskip\fontdimen10 \scriptfont\tw@
 \advance\baselineskip\fontdimen12 \scriptfont\tw@
 \lineskip\thr@@\fontdimen8 \scriptfont\thr@@
 \lineskiplimit\lineskip
 \vbox\bgroup\ialign\bgroup\hfil$\m@th\scriptstyle{##}$\hfil\crcr}
    \def\Sb{_\multilimits@}
    \def\endSb{\crcr\egroup\egroup\egroup}
\DeclareRobustCommand*\cal{\@fontswitch\relax\mathcal}
\title{Network Games Induced Prior for Graph Topology Learning}
\author{ 
\IEEEauthorblockN{Chenyue Zhang, Shangyuan Liu, Hoi-To Wai, Anthony Man-Cho So}
\IEEEauthorblockA{\textit{Dept. of Systems Engineering \& Engineering Management}, \textit{The Chinese University of Hong Kong}\\
\texttt{\{czhang, htwai, manchoso\}@se.cuhk.edu.hk}, \texttt{shangyuanliu@link.cuhk.edu.hk}}
}
\date{}
\begin{document}
\maketitle

\begin{abstract} 
Learning the graph topology of a complex network is challenging due to limited data availability and imprecise data models. A common remedy in existing works is to incorporate priors such as sparsity or modularity which highlight on the structural property of graph topology. We depart from these approaches to develop priors that are directly inspired by complex network dynamics. Focusing on social networks with actions modeled by equilibriums of linear quadratic games, we postulate that the social network topologies are optimized with respect to a social welfare function. 
Utilizing this prior knowledge, we propose a network games induced regularizer to assist graph learning. 
We then formulate the graph topology learning problem as a bilevel program. We develop a two-timescale gradient algorithm to tackle the latter. We draw theoretical insights on the optimal graph structure of the bilevel program and show that they agree with the topology in several man-made networks. Empirically, we demonstrate the proposed formulation gives rise to reliable estimate of graph topology.
\end{abstract}

\begin{IEEEkeywords}graph signal processing, graph topology learning, network games, bilevel program\end{IEEEkeywords}

\section{Introduction}
Graph-based structures are increasingly utilized in data science to represent relationships among features and datasets. In particular, graph representations are pivotal for unveiling relational networks and supporting diverse learning tasks like graph neural networks, sampling, semi-supervised learning, and graph signal processing (GSP). In many applications, graph topologies are neither immediately available nor easily discernible. This necessitates inferring graph topologies from node observations, a.k.a.~graph signals. Such problem, also known as {graph (topology) learning}, has garnered attention in machine learning and signal processing. Recent works have developed graph learning algorithms utilizing GSP models, leveraging properties such as smoothness and stationarity \cite{dong2019learning, mateos2019connecting,liu2023logspect}.


Learning the graph topology of complex networks is challenging due to limited data availability and imprecise data models. 
To tackle these issues, a common practice is to incorporate \emph{prior information} to assist graph learning via regularization. 
Notably, as demonstrated in \cite{friedman2008sparse, dong2016learning, kalofolias2016learn, egilmez2017graph, kumar2020unified, nie2016constrained}, integrating carefully crafted regularizers can induce desired graph structures—like sparse, modular, bipartite, regular graphs—within the graph learning paradigm. However, applying such designs requires a-priori knowledge on the \emph{graph topology} which are often acquired in a heuristic fashion. For instance, graphs of certain networks are found to be sparse or modular as observed from the patterns in a number of real world networks. 

An alternative perspective is that graph topologies may be viewed as the optimized solutions with respect to a function/task of the latent network dynamics. Considering the case of social networks, actions of individuals can be modeled as the equilibrium of a network game determined by the graph topology \cite{jackson2015games}. The `optimality' of a graph topology can be measured thru the equilibrium actions---a natural choice is the \emph{total social welfare} function taken as the aggregated equilibrium action \cite{demange2017optimal}. 
As a motivating example, Table~\ref{table:ReaNetWkRewir} shows the average performance loss in total social welfare of perturbing several graph topologies under \emph{random rewiring} (details in Example~\ref{ex:rewire}). 
We find that man-made or social networks ({\tt WikiVote}, {\tt Karate}) are more sensitive to random rewiring than the non-man-made ones ({\tt Dolphins}). Our observation suggests that man-made or social networks may be optimized for maximum total social welfare.

\begin{table}[t]
\centering
\includegraphics[width=0.47\textwidth]{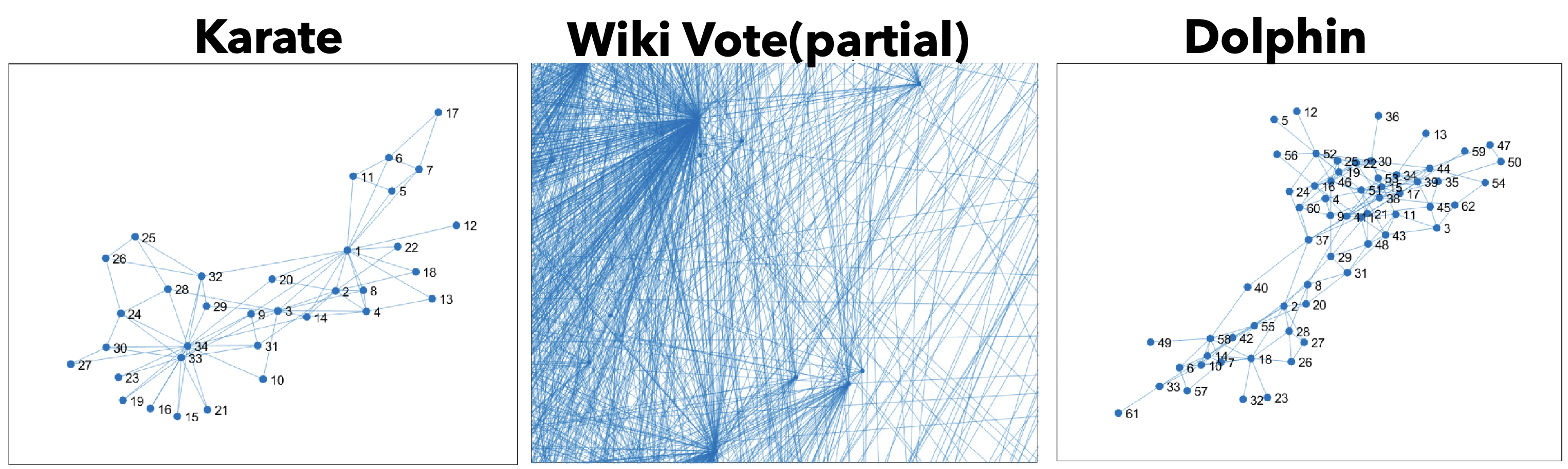}\vspace{.2cm}
\begin{tabular}{l|l|l|l|l|l} 
\toprule 
Rewiring & $10\%$ & $20\%$ & $30\%$ & $40\%$ & $50\%$ \\
\midrule
{\tt Karate} & $\mathbf{94.06\%}$ & $\mathbf{88.86\%}$ & $\mathbf{84.72\%}$ & $\mathbf{81.17\%}$ & $\mathbf{78.30\%}$ \\
{\tt WikiVote} & $\mathbf{96.28\%}$ & $\mathbf{93.07\%}$ & $\mathbf{90.32\%}$ & $\mathbf{88.01\%}$ & $\mathbf{86.11\%}$ \\
{\tt Dolphins} & $98.15\%$ & $96.48\%$ & $95.13\%$ & $93.97\%$ & $93.08\%$ \\
\bottomrule
\end{tabular}
\caption{(Top) Graphs of tested networks. (Bottom) Impact of random rewiring on  total social welfare. See Example~\ref{ex:rewire} of Sec.~\ref{sec:main}.
}\label{table:ReaNetWkRewir}\vspace{-.2cm}
\end{table}

This paper leverages the total social welfare as a {\it functional prior} for graph learning. It departs from the conventional structural prior approach with an emphasis on knowledge of the {\it tasks performed by networks} and suggests to regulate graph learning by adjusting the latent network dynamics that drives the prior.
As the first step towards treating general functional priors, our contributions are:
\begin{itemize}[leftmargin=*]
\item We propose and formulate a graph learning problem from smooth signals with a network game induced prior (GLGP). The latter is formulated as a bilevel program, which is tackled by developing a two-timescales gradient (TTGD) algorithm inspired by \cite{hong2023two}.
\item We analyze the GLGP problem and provide theoretical insights on the graphs learnt from maximizing the social welfare. Particularly, we show that it induces graph topology with a few hub nodes, which coincides with the examples in Table~\ref{table:ReaNetWkRewir}. 
\item We show that the proposed TTGD algorithm finds an ${\cal O}(1/K)$-stationary solution to the GLGP problem after $K$ iterations. 
\end{itemize}
Finally, we present numerical experiments on synthetic and real data. The proposed GLGP problem learns topologies that are optimized in a task-based manner while respecting the (smooth) graph signal observations. In settings with insufficient data, the proposed method outperforms several state-of-the-art graph learning algorithms.

\vspace{.1cm}
\noindent {\bf Related Works.} 
In addition to the above cited works, network games have been used for modeling \emph{graph data} and have inspired the graph learning problems in \cite{shen2017kernel, wai2019joint, leng2020learning, rossi2022learning}, as well as low pass graph signal models \cite{ramakrishna2020user, zhang2024detecting}. Our work differs from them as we treat the network games only as the latent dynamics driving the functional prior, while the graph learning component depends on the general property of smoothness.
Lastly, we notice that recent works have considered non-structural priors, e.g., \cite{pu2021learning, wasserman2024graph} applied learning-to-optimize to learn a prior function from graph templates, \cite{niresi2024informed} studied the use of domain knowledge, \cite{navarro2024fair} analyzed fairness as a prior.

\section{Preliminary: Graph Learning from Smooth Signals} 
We review the graph learning framework from smooth signals in \cite{kalofolias2016learn}. To fix notations, we consider a networked system characterized by the possibly directed graph ${\cal G} = (V,E)$ with the node set $V = [N] := \{1,\ldots,N\}$ and edge set $E \subseteq V \times V$. The ordered tuple $(i,j) \in E$ indicates an edge from $i$ to $j$. The graph ${\cal G}$ is endowed with a (possibly non-symmetric) {adjacency matrix}, $\GSO \in \RR^{N \times N}$ such that $W_{ij} \neq 0$ if and only if $(j,i) \in E$. 

Our goal is to learn the matrix $\GSO$ given a dataset of $M$ {graph signals} denoted by ${\bm X} := ( {\bm x}_1, \ldots, {\bm x}_M ) \in \RR^{N \times M}$. The graph signals are assumed to be \emph{smooth} such that the signal values, $x_{m,i}, x_{m,j}$, are similar between pairs of adjacent nodes, i.e., if $(i,j) \in E$ or $(j,i) \in E$. 
For example, this corresponds to the case where the graph signals are the output of a low pass graph filter \cite{ramakrishna2020user}, e.g., heat diffusion process, DeGroot opinion dynamics.

Consider the following convex optimization problem for learning the graph topology $\GSO$ inspired by \cite{kalofolias2016learn}: 
\begin{align}
\displaystyle \min_{ \GSO  } & \displaystyle ~J(\GSO; {\bm X}) := \frac{1}{2M} \sum_{i,j=1}^N W_{ij} \| {\bm x}_i^{\rm row} - {\bm x}_j^{\rm row} \|^2 + \beta \| \GSO \|_{\rm F}^2 \label{eq:graph_learn} \\
{\rm s.t.} & ~\GSO \in {\cal S} := \{ \GSO : \GSO \geq {\bm 0}, {\rm diag}( \GSO ) = {\bm 0}, \GSO {\bm 1}=c{\bm 1}\}, \notag
\end{align}
where ${\bm x}_i^{\rm row} \in \RR^M$ denotes the $i$th row vector of ${\bm X}$, $c, \beta > 0$ are regularization parameters and ${\cal S}$ is the set of admissible adjacency matrices. Similar formulations to \eqref{eq:graph_learn} include the GLSigRep model \cite{dong2016learning} which can be considered as a special case of \eqref{eq:graph_learn} as discussed in \cite{kalofolias2016learn}. 
This allows one to connect \eqref{eq:graph_learn} to the MAP estimation of $\GSO$ under a graph filter model \cite{dong2016learning}. In particular, with sufficient number of observations ($M \gg 1$), solving the graph learning problem \eqref{eq:graph_learn} yields an accurate estimate of the ground truth topology. 
We remark that in \cite{kalofolias2016learn}, the above formulation was developed for the case when $\GSO$ is symmetric. Moreover, alternative formulations can be applied, e.g., via exploiting stationary graph signals \cite{segarra2017network}, structural equation model \cite{shen2017kernel}; see the overviews in \cite{mateos2019connecting, dong2019learning}. 

In practice, high quality smooth graph signals are rare for large networked systems. Typically, $M \ll N$. It is customary to introduce an additional regularizer to \eqref{eq:graph_learn}. Let $\lambda > 0$, we consider
\begin{equation} \label{eq:graph_learn_prob}
\textstyle \min_{ \GSO \in {\cal S} }~J( \GSO; {\bm X} ) + \lambda \, {\cal R}( \GSO ).
\end{equation}
Common designs for the regularizer ${\cal R}( \GSO )$ impose \emph{prior knowledge} about the \emph{structure of graph topology}. Examples include inducing a sparse graph via setting ${\cal R}(\GSO) = \| \GSO \|_1$ \cite{kalofolias2016learn}, inducing a modular graph with $K$ densely connected components via setting {${\cal R}( \GSO ) = \sum_{i=K+1}^N \sigma_i( \GSO )$}, where $\sigma_i(\cdot)$ denotes the $i$th largest eigenvalues of $\GSO$ \cite{nie2016constrained}; also see \cite{kumar2020unified}. Although these  are intuitive designs, they may fail to capture the intricate features in the graph topology.

As mentioned in the Introduction, this work focuses on a novel class of regularizer for graph topology learning using \emph{prior information} modeled directly by the \emph{networked system dynamics}. The next section will introduce a {functional regularizer} induced by \emph{network games} inspired by the economics literature. 

\section{Graph Learning with Network Games Prior} \label{sec:main}
We will first introduce a general linear-quadratic network game and discuss its basic properties. Then, we will develop a network games-inspired regularizer ${\cal R}( \GSO )$ as an implicit function of $\GSO$ and justify its effectiveness using real network examples. Finally, a two-timescale gradient descent (TTGD) algorithm will be derived to tackle the resultant bilevel optimization problem.

\noindent \textbf{Network Games.} Let $i \in V$ denotes the $i$th individual/agent in a social network described by $\GSO$, $y_i \in \RR_+$ be the action selected by the $i$th agent, and ${\bm y}_{-i}$ denotes the vector ${\bm y} = (y_1, \ldots, y_N)$ with the $i$-th element removed. We concentrate on the network games with linear-quadratic payoffs and strategic complements \cite{jackson2015games}. For any $i \in V$, the $i$th agent selects an action $y_i$ to maximize a payoff function depending on the neighbors' actions:
\beq \textstyle \label{eq:lq_game}
U_i( y_i ; {\bm y}_{-i} ; \GSO ) = - \frac{y_i^2}{2} + y_i \left( \sum_{j=1, j \neq i}^N W_{ij} f(y_j) + b_i \right),
\eeq 
where $b_i \geq 0$ is the marginal benefit and $f : \RR_+ \to \RR_+$ is an \emph{interaction function} such that an increase in neighbor's action can positively affect $y_i$ \cite{cai2024optimal}. Throughout, we assume $\GSO \in {\cal S}$ and
\begin{assumption} \label{assu:lipsf}
The function $f: \RR_+ \to \RR_+$ is $1$-Lipschitz, twice differentiable, non-decreasing, and concave. Moreover, it holds that $f(0)=0$, $f(x) \leq x$, $|f''(x)| \leq 1$ for any $x \geq 0$, and $c < 1$.
\end{assumption}

Given $\GSO$, we are interested in the \emph{Nash Equilibrium} (NE) strategy of \eqref{eq:lq_game}, ${\bm y}^{\sf NE}( \GSO )$, which is a set of actions taken by agents in $V$ where no agent  shall change his/her action. In other words, it holds that 
\beq \textstyle \label{eq:ne_form}
\begin{aligned}
y_i^{\sf NE} = {\sf T}_i( {\bm y}^{\sf NE}; \GSO ) & \textstyle := \argmax_{ y_i \geq 0} U_i( y_i ; {\bm y}_{-i}^{\sf NE} ; \GSO ) \\
& \textstyle = \max\{0 , b_i + \sum_{j=1, j \neq i}^N W_{ij} f( y_j^{\sf NE} ) \} ,
\end{aligned}
\eeq 
for all $i \in V$.
It is proven in \cite{cai2024optimal} that under H\ref{assu:lipsf}, $\GSO \in {\cal S}$, the NE is well-defined and is the fixed point of ${\sf T} ( {\bm y}; \GSO ) := ( {\sf T}_1( {\bm y}; \GSO ) ; \cdots ; {\sf T}_N ( {\bm y}; \GSO ) )$. 
For example, the actions in an NE strategy corresponds to the intensity of economics activities \cite{Candogan}. 

The NE strategy induces a performance metric of the candidate network $\GSO$. A reasonable metric is the \emph{total social welfare} \cite{demange2017optimal}:\vspace{-.2cm}
\beq 
{\sf Wel}( \GSO ) := {\bf 1}^\top {\bm y}^{\sf NE}( \GSO ) \vspace{-.2cm}
\eeq 
such that ${\bm y}^{\sf NE}(\cdot)$ satisfies \eqref{eq:ne_form}. We conjecture that
\begin{center}
\emph{{\bf (C)}~the topologies of man-made networks (e.g., social networks) are self-optimized for maximum ${\sf Wel}( \GSO )$.}
\end{center}
In general, we conjecture that network topologies are \emph{task optimized}.
Although it remains an open problem to verify the conjecture, empirically we found that several man-made networks exhibit traits of being self-optimized w.r.t.~${\sf Wel}( \GSO )$.
\begin{Exa} \label{ex:rewire}
We revisit the motivating example in Table~\ref{table:ReaNetWkRewir}. In detail, this example evaluates ${\sf Wel}( \GSO )$ subject to various levels of random rewiring. Fixing {${\bm b} = {\bm 1}$}, the table shows the welfare ratio
\beq 
P_{\sf pert.} = \mathbb{E} \left[ \frac{ {\sf Wel}( \GSO_{\sf pert.} ) - {\bf 1}^\top {\bm b} }{ {\sf Wel}( \GSO_{\sf orig.} ) - {\bf 1}^\top {\bm b} } \right],
\eeq 
where $\GSO_{\sf orig.}$, $\GSO_{\sf pert.}$ are the {original, perturbed binary adjacency matrices}, respectively.
Under conjecture {\bf (C)}, we anticipate that self-optimized networks will suffer from greater drop in $P_{\sf pert.}$ than non-optimized ones with the same proportion of random rewiring. The table supports the conjecture by comparing the performances of the \texttt{WikiVote}, \texttt{Karate} networks (man-made) against the {\tt Dolphins} network (non-man-made).
\end{Exa}
\noindent We refer the readers to \cite{demange2017optimal, sanhedrai2022reviving, meena2023emergent} for related observations on how real world network topologies show traits of self-optimization.  

\vspace{.1cm}
\noindent \textbf{Graph Learning with Network Games Prior.} Suppose that the ground truth graph ${\cal G}$ is a man-made network. Our next step is to incorporate the social welfare prior discussed above into the graph learning formulation from smooth signals. 

Under conjecture {\bf (C)}, it is natural to take the regularizer as the total social welfare function, i.e., ${\cal R}( \GSO ) = - {\sf Wel}( \GSO )$. Substituting the above into \eqref{eq:graph_learn_prob} gives rise to the following graph learning with network games prior \eqref{eq:glfp} problem:
\begin{align}
\displaystyle \min_{ \GSO, {\bm y} } & ~~\Phi( \GSO, {\bm y} ) := J( \GSO; {\bm X} ) - \lambda {\bf 1}^\top {\bm y} \label{eq:glfp} \tag{GLGP} \\
\text{s.t.} & ~~\textstyle y_i \in \argmax_{ \hat{y}_i \in \RR_+ } U_i( \hat{y}_i ; {\bm y}_{-i} ; \GSO ),~\forall~i \in V,~\GSO \in {\cal S}. \notag
\end{align} 
The proposed GLGP formulation utilizes knowledge from both smooth graph signals and the social welfare maximizing property of man-made networks. As ${\sf Wel}( \GSO )$ depends implicitly on $\GSO$, \eqref{eq:glfp} is a \emph{bilevel program} with a variational inequality constraint. 

\subsection{Structural Interpretation for the Network Games Prior}
Note that \eqref{eq:glfp} takes the total social welfare ${\sf Wel}( \GSO )$ to measure the network's performance and inform graph learning from a \emph{task oriented} perspective. To further investigate properties of the network games prior, this subsection offers a theory-guided interpretation. 

Our idea is to consider relaxation of the bilevel optimization problem and analyze the KKT condition of the relaxed problem. First, we consider the simplified problem:\vspace{-.1cm}
\begin{equation} \label{eq:net-bilevel-app} \textstyle
\min_{ \GSO \in {\cal S} } J( \GSO; {\bm X} ) - \lambda {\bf 1}^\top \GSO {\bm b}. \vspace{-.1cm}
\end{equation}
\begin{Prop} \label{prop:approx}
Under H\ref{assu:lipsf}, problem \eqref{eq:net-bilevel-app} is equivalent to a relaxed version of problem \eqref{eq:glfp}.
\end{Prop}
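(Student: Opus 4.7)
The plan is to unfold the variational-inequality constraint of \eqref{eq:glfp} into the fixed-point form of the NE, then perform two successive simplifications---linearizing the interaction function $f$ and truncating the resulting Neumann series at first order---both of which can be interpreted as relaxations that reduce the bilevel objective exactly to \eqref{eq:net-bilevel-app} up to an additive constant that does not depend on $\GSO$.

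First I would invoke H\ref{assu:lipsf} together with $\GSO \in {\cal S}$: since $f \geq 0$ on $\RR_+$, $\GSO \geq {\bm 0}$, and ${\bm b} \geq {\bm 0}$, the argument of the outer $\max$ in \eqref{eq:ne_form} is automatically non-negative, so the NE collapses to the fixed-point equation ${\bm y}^{\sf NE} = {\bm b} + \GSO f({\bm y}^{\sf NE})$. Because $f$ is $1$-Lipschitz with $f(0) = 0$, one has $f(x) \leq x$ for all $x \geq 0$; replacing $f$ by the identity yields the linearized fixed-point system $\tilde{\bm y} = {\bm b} + \GSO \tilde{\bm y}$. This is the first relaxation, which componentwise upper-envelopes ${\bm y}^{\sf NE}$. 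For $\GSO \in {\cal S}$ the row sums equal $c$, so $\rho(\GSO) \leq c < 1$ by Perron--Frobenius; hence $({\bm I} - \GSO)^{-1}$ exists and admits the non-negative Neumann expansion $\tilde{\bm y} = \sum_{k \geq 0} \GSO^k {\bm b}$. Truncating this expansion at first order gives $\tilde{\bm y} \approx {\bm b} + \GSO {\bm b}$, so ${\bf 1}^\top \tilde{\bm y} = {\bf 1}^\top {\bm b} + {\bf 1}^\top \GSO {\bm b}$. Substituting into $\Phi$ and discarding the $\GSO$-independent constant $\lambda\,{\bf 1}^\top {\bm b}$ leaves precisely $J(\GSO; {\bm X}) - \lambda\,{\bf 1}^\top \GSO {\bm b}$, which is \eqref{eq:net-bilevel-app}. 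The feasible set remains ${\cal S}$ in both problems, so the two formulations share the same argmin.

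The main obstacle I anticipate is pinning down the meaning of ``relaxed version''. The linearization $f \mapsto \text{Id}$ enlarges the welfare surrogate pointwise in $\GSO$, so $-\lambda\,{\bf 1}^\top {\bm y}$ is replaced by a smaller quantity---this is a bona fide relaxation of the minimization. The first-order Neumann truncation is best viewed as retaining only the dominant term in $c$: since $|{\bf 1}^\top \GSO^k {\bm b}| \leq c^k \,{\bf 1}^\top {\bm b}$, the omitted tail is ${\cal O}(c^2 \,{\bf 1}^\top {\bm b})$, which is negligible in the regime $c \ll 1$ implicit in H\ref{assu:lipsf}. Stating Proposition~\ref{prop:approx} rigorously therefore requires being explicit that ``relaxation'' bundles both the $f \preceq \text{Id}$ upper envelope and the small-$c$ first-order truncation; once both are declared, the chain of equalities above delivers the proposition.
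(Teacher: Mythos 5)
Your opening moves match the paper's: under H\ref{assu:lipsf} and $\GSO\in{\cal S}$ the NE constraint collapses to ${\bm y} = {\bm b} + \GSO f({\bm y})$, and $f(x)\le x$ together with $\GSO\ge{\bm 0}$ lets you pass to the linear system. But there is a genuine gap in how you handle the two subsequent steps, and it is exactly what the word ``relaxed'' in the proposition is doing. First, a relaxation must enlarge the feasible set; replacing the equality ${\bm y}={\bm b}+\GSO f({\bm y})$ by the equality $\tilde{\bm y}=({\bm I}-\GSO)^{-1}{\bm b}$ is a substitution, not a relaxation, since the true NE pair $(\GSO,{\bm y}^{\sf NE}(\GSO))$ is generally no longer feasible for the new system. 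The paper instead relaxes the equality to the inequality ${\bm y}\le\GSO{\bm y}+{\bm b}$, hence ${\bm y}\le({\bm I}-\GSO)^{-1}{\bm b}$, and only recovers an equality-type conclusion at the very end via the optimality principle: the objective $-\lambda\,{\bf 1}^\top{\bm y}$ is decreasing in ${\bm y}$, so the minimizer over the relaxed set sits at the upper bound. You never invoke this last step, and without it substituting $\tilde{\bm y}$ into $\Phi$ is unjustified.

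Second, and more seriously, your first-order Neumann truncation goes the wrong way: dropping the non-negative tail $\sum_{k\ge 2}\GSO^{k}{\bm b}$ \emph{shrinks} the upper bound on ${\bm y}$, i.e., tightens rather than relaxes the constraint, and your fallback that the tail is ``negligible for $c\ll 1$'' is unavailable because H\ref{assu:lipsf} only assumes $c<1$ (the experiments even use $c=0.95$). The paper avoids this by using the row-sum constraint $\GSO{\bf 1}=c{\bf 1}$ to bound the tail from above by the $\GSO$-independent constant vector $(1-c)^{-1}c^{2}\|{\bm b}\|_{\infty}{\bf 1}$; enlarging the upper bound by this constant is a further legitimate relaxation, and because the constant does not depend on $\GSO$ it contributes only an additive constant to the objective once the optimality principle pins ${\bm y}$ to the bound, leaving exactly \eqref{eq:net-bilevel-app}. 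With these two repairs---inequality constraints plus the optimality principle for ${\bm y}$, and an upper (not lower) bound on the Neumann tail---your argument becomes the paper's.
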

\noindent The proposition is obtained by rewriting the NE condition as nonlinear equality constraints and relaxing the latter using H\ref{assu:lipsf}. 

The above proposition shows that \eqref{eq:net-bilevel-app} serves as a simplified surrogate for analyzing the optimal solution of \eqref{eq:glfp}. Particularly, 
\begin{Prop} \label{prop:kkt}
There exists $\eta_i \in \RR, i \in V$ such that any optimal solution to Problem \eqref{eq:net-bilevel-app} is given by \vspace{-.2cm}
\beq \label{eq:netbilevel-sol}
    W_{ij}^\star = {\textstyle \frac{1}{2 \beta}} \max \Big\{ 0, \lambda b_j + \eta_i -\frac{ \| {\bm x}_i^{\rm row} - {\bm x}_j^{\rm row} \|^2}{2M} \Big\},  
\eeq
for any $i \neq j$ and $W_{ii}^\star = 0$.
It also holds that $\GSO^\star {\bf 1} = c {\bf 1}$.
\end{Prop}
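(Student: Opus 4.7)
The plan is to exploit the fact that problem \eqref{eq:net-bilevel-app} is a convex quadratic program with linear constraints, so its KKT system is both necessary and sufficient for optimality, and to read off the stated formula directly from stationarity plus complementary slackness. Since all constraints defining $\mathcal{S}$ are affine, a linear constraint qualification holds automatically, so Lagrange multipliers exist at every optimum.

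First I would form the Lagrangian of \eqref{eq:net-bilevel-app} by attaching multipliers $\mu_{ij}\ge 0$ to the nonnegativity constraints $W_{ij}\ge 0$, $\nu_i\in\mathbb{R}$ to the zero-diagonal constraints $W_{ii}=0$, and $\eta_i\in\mathbb{R}$ to the row-sum constraints $\sum_{j}W_{ij}=c$. Writing $\Phi(W):=J(W;\bm X)-\lambda\bm 1^\top W\bm b$, this gives
\begin{equation*}
L(W,\mu,\eta) \;=\; \Phi(W)-\sum_{i\neq j}\mu_{ij}W_{ij}-\sum_{i}\eta_i\Bigl(\textstyle\sum_{j}W_{ij}-c\Bigr),
\end{equation*}
after accounting for the diagonal constraint by simply restricting to $i\neq j$.

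Next I would compute the stationarity condition $\partial L/\partial W_{ij}=0$ for each $i\neq j$. Since $\Phi$ is separable across off-diagonal entries with $\partial\Phi/\partial W_{ij}=\|\bm x_i^{\rm row}-\bm x_j^{\rm row}\|^2/(2M)+2\beta W_{ij}-\lambda b_j$, stationarity rearranges to
\begin{equation*}
2\beta W_{ij}^\star \;=\; \lambda b_j+\eta_i+\mu_{ij}-\frac{\|\bm x_i^{\rm row}-\bm x_j^{\rm row}\|^2}{2M},
\end{equation*}
(after an innocuous sign relabelling of $\eta_i$, which is unconstrained). The final step is complementary slackness, $\mu_{ij}W_{ij}^\star=0$: if $W_{ij}^\star>0$ then $\mu_{ij}=0$ and the bracketed quantity is positive, while if $W_{ij}^\star=0$ one chooses $\mu_{ij}\ge 0$ to absorb a nonpositive bracket. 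The two cases collapse to the stated $\max\{0,\cdot\}$ expression \eqref{eq:netbilevel-sol}, and $W^\star\bm 1=c\bm 1$ is just primal feasibility.

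The only conceptual point that needs a little care is convexity and sufficiency: the quadratic term $\beta\|W\|_{\rm F}^2$ makes $\Phi$ strictly convex in $W$, and $\mathcal{S}$ is a nonempty convex polytope (e.g., the scaled complete-graph adjacency $(c/(N-1))(\bm 1\bm 1^\top-I)\in\mathcal{S}$), so a minimizer exists, is unique, and satisfies KKT. I do not expect any serious obstacle; the main thing to be careful about is the sign convention on $\eta_i$ and verifying that the $\eta_i$ appearing in \eqref{eq:netbilevel-sol} corresponds to the row-sum multiplier after that relabelling, so that the form of the solution matches the statement verbatim.
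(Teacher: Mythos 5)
Your proposal is correct and follows essentially the same route as the paper's proof: form the Lagrangian with multipliers for the nonnegativity, diagonal, and row-sum constraints, apply stationarity, and use complementary slackness on $\mu_{ij}W_{ij}^\star=0$ to collapse the two cases into the $\max\{0,\cdot\}$ formula. Your added remarks on constraint qualification, strict convexity, and existence of a feasible point are sound refinements but do not change the argument.
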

\noindent The proof is obtained by analyzing the KKT conditions of \eqref{eq:net-bilevel-app}.

To study \eqref{eq:netbilevel-sol}, we consider the case when $\lambda \gg 1$. The proposition above shows that $W_{ij}^\star \approx \frac{\lambda}{2\beta} b_j$ for any $i \in V$. In this way, any optimal solution to \eqref{eq:net-bilevel-app} gives a graph topology that exhibits a \emph{`hub' structure} where edges are emanated from nodes with large $b_j$. In fact, this observation coincides with the man-made network examples studied in Table~\ref{table:ReaNetWkRewir} where we observe a number of `hub' nodes.





\subsection{Efficient Algorithm for Tackling \eqref{eq:glfp}}
The next endeavor is to derive an efficient algorithm for tackling \eqref{eq:glfp}.
Here, the problem entails a lower level subproblem that requires solving the NE given the candidate $\GSO$, and an upper level subproblem depending on the computed NE ${\bm y}$. While the lower level problem can be solved by fixed point iteration \cite{cai2024optimal} (under H\ref{assu:lipsf}), the overall bilevel problem remains non-convex in general.

Our idea is to develop a two-timescale gradient (TTGD) algorithm similar to \cite{hong2023two}. We first notice that under H\ref{assu:lipsf}, 
\eqref{eq:glfp} is equivalent to a single level optimization problem to minimize $\ell( \GSO ) := \Phi( \GSO; {\bm y}^{\sf NE} ( \GSO ) )$. A common solution is to apply the projected gradient descent algorithm: let $\gamma>0$ be the step size,
\beq \label{eq:proj_grad}
\GSO^{k+1} = {\sf Proj}_{ \cal S } ( \GSO^k - \gamma \grd \ell( \GSO^k ) ),~\forall~k \geq 0,
\eeq 
where ${\sf Proj}_{\cal S}(\cdot)$ denotes the Euclidean projection onto ${\cal S}$.
The challenge of \eqref{eq:proj_grad} lies in the gradient computation $\grd \ell( \GSO^k )$ since $\ell( \GSO )$ has an implicit dependence on $\GSO$ via the NE map ${\bm y}^{\sf NE}(\cdot)$. Let $\bar{\bm y}^k := {\bm y}^{\sf NE}( \GSO^k )$, it can be shown that \cite{liu2022inducing}
\begin{align}
& \grd \ell( \GSO^k ) = \widehat{\grd} \Phi( \GSO^k , \bar{\bm y}^k ) := \grd_{\GSO} \Phi( \GSO^k; \bar{\bm y}^k ) \label{eq:grad_ideal} \\
& - ( {\rm J}_{\GSO} {\sf T}( \bar{\bm y}^k; \GSO^k) )^\top ( {\rm J}_{\bm y} {\sf T}( \bar{\bm y}^k; \GSO^k) - {\bm I}_N )^{-\top} \grd_{\bm y} \Phi( \GSO^k; \bar{\bm y}^k ), \notag
\end{align} 
where ${\rm J}_{\GSO} {\sf T}(\cdot), {\rm J}_{\bm y} {\sf T}(\cdot)$ denotes the Jacobian of the operator ${\sf T}: \RR^N \to \RR^N$ in \eqref{eq:ne_form} w.r.t.~$\GSO, {\bm y}$, respectively, and $\grd_{\GSO} \Phi(\cdot), \grd_{\bm y} \Phi(\cdot)$ denotes the partial gradient taken w.r.t.~$\GSO, {\bm y}$, respectively.

However, evaluating $\grd \ell( \GSO^k )$ requires the NE strategy ${\bm y}^{\sf NE}( \GSO^k )$, where the latter may not be available in closed form. That said, when $\GSO^k$ is fixed, the fixed point iteration ${\bm y} \leftarrow {\sf T}( {\bm y}; \GSO^k )$ finds the NE strategy at a linear rate \cite{parise2019variational}. As inspired by \cite{hong2023two} and let $\gamma, \alpha > 0$ be the stepsizes, we consider:
\begin{tcolorbox}[boxsep=2pt,left=4pt,right=4pt,top=3pt,bottom=3pt, 
]
\underline{\emph{Two Timescale Gradient (TTGD) procedure}}: for $k \geq 0$,
\vspace{-.0cm}\begin{subequations} \label{eq:ttgd}
\begin{align}
{\bm y}^{k+1} & = {\bm y}^k + \alpha ( {\sf T}( {\bm y}^k ; \GSO^k ) - {\bm y}^k ), \label{eq:ttgd_y} \\
\GSO^{k+1} & = {\sf Proj}_{ \cal S } ( \GSO^k - \gamma \widehat{\grd} \Phi( \GSO^k , {\bm y}^{k+1} ) ). \vspace{-.1cm} \label{eq:ttgd_W}
\end{align}
\end{subequations} 
\end{tcolorbox}
Comparing to \eqref{eq:proj_grad}, the TTGD procedure uses an inexact version of the gradient of $\ell$ in \eqref{eq:ttgd_W} evaluated on ${\bm y}^k$. Here, the intuition is that when $\gamma \ll \alpha$,  $\GSO^k$ will appear to be `static' w.r.t.~the update of NE strategy \eqref{eq:ttgd_y}. By adjusting $\gamma, \alpha$, we can ensure that the latter finds a close approximation for ${\bm y}^{\sf NE}( \GSO^k )$ and therefore the algorithm converges. Formally, we observe that: 
\begin{Theorem} \label{thm:ttgd}
Assume H\ref{assu:lipsf}, then there exists a set of step sizes with {$\alpha = \frac{1-c}{(1+c)^2}$}, $\gamma \ll \alpha$, such that 
it holds for any $K \geq 1$,
\beq \notag
\min_{k=1,...,K} \| \gamma^{-1} ( \GSO^k - {\sf Proj}_{\cal S} ( \GSO^k - \gamma \grd \ell( \GSO^k ) ) ) \|^2 = {\cal O}(K^{-1}).
\eeq 
\end{Theorem}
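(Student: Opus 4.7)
The plan is to recast TTGD as an inexact projected gradient method on the single-level reformulation $\min_{\GSO \in {\cal S}}\ell(\GSO)$, where $\ell(\GSO) := \Phi(\GSO;{\bm y}^{\sf NE}(\GSO))$. I would then combine (i) a descent lemma for the outer iteration with (ii) a linear contraction argument for the inner iteration, merging them through a Lyapunov function that jointly tracks the optimality gap on $\GSO^k$ and the NE tracking error $e_k := \|{\bm y}^k - {\bm y}^{\sf NE}(\GSO^k)\|^2$. The target quantity is the canonical gradient mapping $G^k := \gamma^{-1}(\GSO^k - {\sf Proj}_{\cal S}(\GSO^k - \gamma \nabla \ell(\GSO^k)))$, and the ${\cal O}(1/K)$ rate follows once a per-iteration descent of the form $V_{k+1} \leq V_k - \tfrac{\gamma}{4}\|G^k\|^2 + o(\gamma)$ is established and telescoped.

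First, I would extract contractivity of the inner map. Under H\ref{assu:lipsf}, the Jacobian $\mathrm{J}_{\bm y}{\sf T}({\bm y};\GSO)$ (away from the max kink) has entries $W_{ij}f'(y_j)$ with $|f'|\le 1$ and $\GSO {\bm 1}=c{\bm 1}$, so ${\sf T}(\cdot;\GSO)$ is $c$-Lipschitz in the $\ell_\infty$-norm and the projection inside the $\max$ is non-expansive. This lets me prove $\|{\bm y}^{k+1} - {\bm y}^{\sf NE}(\GSO^k)\| \leq (1-\alpha(1-c))\|{\bm y}^k - {\bm y}^{\sf NE}(\GSO^k)\|$ and, by optimizing the constant that will appear after composing with the perturbation step below, recover the stated $\alpha = \tfrac{1-c}{(1+c)^2}$. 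Second, I would show that ${\bm y}^{\sf NE}(\cdot)$ is Lipschitz in $\GSO$ via the implicit function theorem applied to ${\sf T}({\bm y};\GSO) = {\bm y}$, using $\|({\rm I} - \mathrm{J}_{\bm y}{\sf T})^{-1}\| \leq (1-c)^{-1}$ from the contraction. Combining with the outer update $\|\GSO^{k+1}-\GSO^k\| \leq \gamma \|\widehat{\nabla}\Phi(\GSO^k,{\bm y}^{k+1})\|$ yields the one-step recursion
\begin{equation*}
e_{k+1} \leq (1-c_1) e_k + c_2\, \gamma^2\, \|\widehat{\nabla}\Phi(\GSO^k,{\bm y}^{k+1})\|^2,
\end{equation*}
for constants $c_1,c_2>0$ depending only on $c$ and the Lipschitz data, after squaring and using Young's inequality.

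Third, I would establish that $\ell$ has Lipschitz gradient on ${\cal S}$. Smoothness of $\Phi$ is immediate from its quadratic form in $\GSO$ and the explicit dependence on ${\bm y}$; smoothness of ${\bm y}^{\sf NE}(\cdot)$ follows from twice-differentiability of $f$ in H\ref{assu:lipsf} and boundedness of $({\rm I}-\mathrm{J}_{\bm y}{\sf T})^{-1}$. The hypergradient formula \eqref{eq:grad_ideal} then gives a constant $L_\ell$ with $\|\nabla\ell(\GSO) - \widehat{\nabla}\Phi(\GSO,{\bm y})\| \leq L_b\|{\bm y} - {\bm y}^{\sf NE}(\GSO)\|$, so the outer update is a biased projected gradient step with bias $O(e_k^{1/2})$. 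A standard descent inequality for projected gradient on nonconvex smooth problems with convex constraints (e.g., Ghadimi--Lan style) gives
\begin{equation*}
\ell(\GSO^{k+1}) \leq \ell(\GSO^k) - \tfrac{\gamma}{2}\|G^k\|^2 + C_1 \gamma\, e_{k+1},
\end{equation*}
provided $\gamma \leq 1/L_\ell$. Adding $\rho\, e_{k+1}$ for a suitable $\rho>0$ and invoking the inner recursion produces the Lyapunov function $V_k := \ell(\GSO^k) - \ell^\star + \rho e_k$ satisfying $V_{k+1} \leq V_k - \tfrac{\gamma}{4}\|G^k\|^2$ as soon as $\gamma$ is chosen small enough relative to $\alpha$; this is precisely the two-timescale condition $\gamma \ll \alpha$. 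Telescoping over $k=1,\ldots,K$ and dividing by $K\gamma$ yields the claimed $\min_k \|G^k\|^2 = {\cal O}(1/K)$.

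The main obstacle I anticipate is the interplay between the non-smooth $\max\{0,\cdot\}$ in ${\sf T}_i$ and the differentiability needed for \eqref{eq:grad_ideal}. I would address this either by arguing that the active set is locally constant along the iterates (so the Jacobians $\mathrm{J}_{\bm y}{\sf T}$ and $\mathrm{J}_{\GSO}{\sf T}$ are well-defined almost everywhere with uniformly bounded Clarke generalized Jacobians) and working with generalized gradients, or by noting that the same contraction and Lipschitz estimates hold in a non-smooth sense since $\max\{0,\cdot\}$ is $1$-Lipschitz. The remaining bookkeeping---calibrating $\rho$, $\alpha$, and $\gamma$ so that the inner recursion's bias term is dominated by $\tfrac{\gamma}{4}\|G^k\|^2$ and the cross terms between $e_k$ and $\|\widehat{\nabla}\Phi\|^2$ are absorbed---is the routine, but quantitative, step that pins down the specific $\alpha=\tfrac{1-c}{(1+c)^2}$ claimed in the theorem.
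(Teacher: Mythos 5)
Your overall architecture is the same as the paper's: contraction of the inner fixed-point iteration (the paper phrases it via the map ${\sf G}_{\GSO}({\bm y})={\bm y}-{\sf T}({\bm y};\GSO)$ being $(1+c)$-Lipschitz and $(1-c)$-strongly monotone, which is what forces $\alpha=\tfrac{1-c}{(1+c)^2}=\mu_g/L_g^2$), Lipschitz continuity of ${\bm y}^{\sf NE}(\cdot)$ in $\GSO$, $L_\ell$-smoothness of $\ell$, a descent lemma for the outer step with bias proportional to the tracking error, and a coupled two-sequence recursion that is telescoped (the paper's Propositions 4--5 play exactly the role of your Lyapunov function $V_k=\ell(\GSO^k)-\ell^\star+\rho e_k$). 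The final conversion to the exact gradient mapping is done in the paper by nonexpansiveness of ${\sf Proj}_{\cal S}$, bounding $\|G^k\|$ by $\gamma^{-1}\|\GSO^{k+1}-\GSO^k\|$ plus $L\|{\bm y}^{k+1}-\bar{\bm y}^k\|$.

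There is, however, one step that fails as written: you drive the tracking-error recursion with $c_2\gamma^2\|\widehat{\grd}\Phi(\GSO^k,{\bm y}^{k+1})\|^2$ and then claim this term is absorbed into the decrease $-\tfrac{\gamma}{4}\|G^k\|^2$ of the Lyapunov function. That absorption is impossible in general: at a constrained stationary point on the boundary of ${\cal S}$ the gradient mapping $G^k$ vanishes while the raw gradient norm $\|\widehat{\grd}\Phi\|$ stays bounded away from zero, so no choice of $\rho$, $\alpha$, $\gamma$ makes $\gamma^2\|\widehat{\grd}\Phi\|^2\lesssim \gamma\|G^k\|^2$; with only a uniform bound on $\|\widehat{\grd}\Phi\|$ your recursion gives $e_k=\Theta(\gamma^2)$ asymptotically, producing a constant error floor rather than an ${\cal O}(1/K)$ rate. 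The fix is to use the iterate displacement $\|\GSO^{k+1}-\GSO^k\|^2$ (i.e., $\gamma^2$ times the squared \emph{inexact} gradient mapping, which satisfies $\|\GSO^{k+1}-\GSO^k\|\le\gamma\|\widehat{\grd}\Phi\|$ but is the tighter quantity) as the forcing term: it appears with the negative coefficient $-(\tfrac{1}{2\gamma}-\tfrac{L_\ell}{2})$ in the descent lemma and can therefore be absorbed, yielding summability of both $\sum_k\|\GSO^{k+1}-\GSO^k\|^2$ and $\sum_k\|{\bm y}^{k+1}-\bar{\bm y}^k\|^2$; the exact gradient mapping is then bounded by these two summable quantities at the very end. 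This is precisely how the paper closes the argument, and with that substitution your proof goes through.
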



\noindent 
Notice that H\ref{assu:lipsf} guarantees that the fixed point map satisfies $\| {\sf T}( {\bm y}; \GSO ) - {\sf T}( {\bm y}' ; \GSO ) \| \leq c \| {\bm y} - {\bm y}' \|$ with $c < 1$ for any ${\bm y}, {\bm y}' \in \RR^N$. Consequently, the Lipschitz-ness of $\grd \ell( \GSO )$ can be established using H\ref{assu:lipsf} and other properties of the NE map. The rest of our analysis follows from adapting the framework in \cite{hong2023two} to the case where the lower level subproblem involves variational inequalities that are solved in a deterministic fashion. 

Lastly, we comment on the computation complexity of \eqref{eq:ttgd}. The key bottleneck lies in computing the hyper-gradient in \eqref{eq:grad_ideal} involving $({\rm J}_{\bm y} {\sf T}( {\bm y}^k; \GSO^k) - {\bm I}_N)^{-1}$. This entails a complexity of ${\cal O}( N^3 )$ FLOPs per iteration. Nevertheless, we envision that the algorithm can be accelerated using penalty based algorithms such as \cite{ji2021bilevel, kwon2023fully}. 

\section{Numerical Experiments}
The last section presents numerical experiments on synthetic and real data to validate the efficacy of \eqref{eq:glfp}.

\subsection{Synthetic Data}
Our first experiment aims at evaluating the graph learning performance for graph signals generated from a preferential attachment (PA) graph {with one edge to attach for every new node}, 
and \( N = 50 \) nodes \cite{newman2018networks}. 
The probability of a new edge linking to an existing node is proportional to its degree relative to the total degree of all nodes.
We concentrate on a scenario with limited data acquired where only $M = 10 \ll N$ smooth graph signals are observed. Each graph signal is generated via a low pass graph filter, ${\cal H}(\GSO) = \exp({ \GSO /2} )$, as \( \bm{x}_m = \exp({ \GSO /2} ) \bm{u}_m+\bm{w}_m \), where \( {\bm u}_m \sim {\cal N}( 0, {\bm I} ) \) is an i.i.d.~white noise excitation and the observation/modeling noise follows ${\bm w}_m \sim N\left(\mathbf{0}, \sigma^{2} \I\right) $ with $\sigma=0.2$. 
We also fix \( (\beta,c) = (200,0.95) \) for \eqref{eq:graph_learn}.
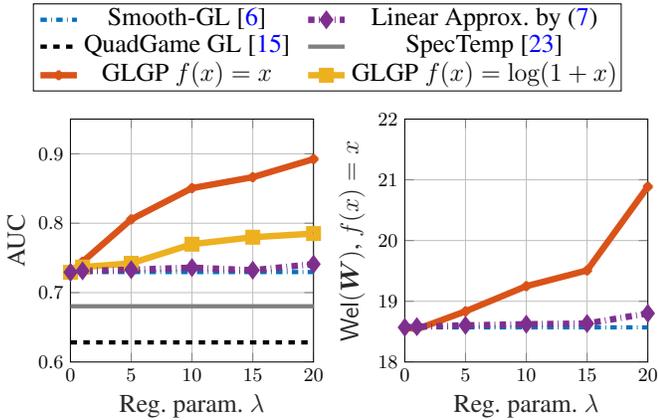
\begin{figure}[t]
\begin{center}
\hspace{-1.2cm}\resizebox{0.95\linewidth}{!}{\definecolor{mycolor1}{rgb}{0.00000,0.44700,0.74100}%
\definecolor{mycolor2}{rgb}{0.85000,0.32500,0.09800}%
\definecolor{mycolor3}{rgb}{0.92900,0.69400,0.12500}%
\definecolor{mycolor4}{rgb}{0.49400,0.18400,0.55600}%

\begin{tikzpicture}
\begin{groupplot}[group style={group name=myplot,group size=2 by 1}]
\nextgroupplot[%
width=4cm,
height=4cm,
scale only axis,
xmin=0,
xmax=20,
xlabel style={font=\color{white!15!black}},
xlabel={\Large Reg.~param.~$\lambda$},
ymin=0.6,
ymax=.95,
ylabel style={font=\color{white!15!black}},
ylabel={\Large AUC},
xtick={0,5,10,15,20}, 
axis background/.style={fill=white},
xmajorgrids,
ymajorgrids,
legend pos = north west,
legend style={
legend cell align=left, align=left, 
  draw opacity=0.8,
  text opacity=1,draw=white!15!black,font=\large},
]

\addplot [color=mycolor1, dashdotted, line width=2.0pt]
  table[row sep=crcr]{%
0.01	0.729628477119408\\
20	0.729628477119408\\
};\label{plots:Smooth-GL}

\addplot [color=black, dashed, line width=2.0pt]
  table[row sep=crcr]{%
0.01	0.628264711379973\\
20	0.628264711379973\\
};\label{plots:xiaowen}

\addplot [color=gray, line width=2.0pt]
  table[row sep=crcr]{%
0.01	0.680347499532703\\
20	0.680347499532703\\
};\label{plots:spec}

\addplot [color=mycolor2, line width=3.0pt, mark=x, mark options={solid, mycolor2}]
  table[row sep=crcr]{%
0.01	0.729273976618125\\
1	0.744401561623817\\
5	0.805497863175246\\
10	0.850517425954564\\
15  0.866268543220787\\
20	0.892368668116708\\
};\label{plots:bi-x}

\addplot [color=mycolor3, line width=3.0pt, mark=square, mark options={solid, mycolor3}]
  table[row sep=crcr]{%
0.01	0.72919127955133\\
1	0.736776219646894\\
5	0.742219387755103\\
10	0.769763398698365\\\\
15  0.779794240343932\\
20	0.785076084555388\\
};\label{plots:bi-log}

\addplot [color=mycolor4, dashdotted, line width=3.0pt, mark=diamond, mark options={solid, mycolor4}]
  table[row sep=crcr]{%
0.01	0.72929223355231\\
1	0.731332762211272\\
5	0.7329382667773859\\
10	0.7359933241926265\\
15  0.7322342801751782\\
20	0.741021146100323\\
};\label{plots:bi-app}
\hspace{.5cm}

\nextgroupplot[%
width=4cm,
height=4cm,
scale only axis,
xmin=0,
xmax=20,
xlabel style={font=\color{white!15!black}},
xlabel={\Large Reg.~param.~$\lambda$},
ymin=18,
ymax=22,
ylabel style={font=\color{white!15!black}},
ylabel={\Large ${\sf Wel}(\GSO)$, $f(x)=x$},
axis background/.style={fill=white},
xmajorgrids,
ymajorgrids,
xtick={0,5,10,15,20}, 
legend style={legend cell align=left, align=left, draw=white!15!black}
]
\addplot [color=mycolor1, dashdotted,   line width=2.0pt]
  table[row sep=crcr]{%
0.01	18.5704046360173\\
20	18.5704046360173\\
};

\addplot [color=mycolor2, line width=3.0pt, mark=x, mark options={solid, mycolor2}]
  table[row sep=crcr]{%
0.01	18.5709170158378\\
1	18.5404392529392\\
5	18.8343646767583\\
10	19.2489121363881\\
15	19.5083283161599\\
20	20.8866333881842\\
};

\addplot [color=mycolor4, dashdotted, line width=3.0pt, mark=diamond, mark options={solid, mycolor4}]
  table[row sep=crcr]{%
0.01	18.5703137275574\\
1	18.5821857792905\\
5	18.6035103869766\\
10	18.6223110963369\\
15	18.6333424682045\\
20	18.8012088366425\\
};
\coordinate (top) at (rel axis cs:1,0);
 \hspace{0.5cm}



\coordinate (bot) at (rel axis cs:0,0);
\end{groupplot}

\path (top|-current bounding box.north)--
      coordinate(legendpos)
      (bot|-current bounding box.north);
\matrix[
    matrix of nodes,
    anchor=south,
    draw,
    inner sep=0.1em,
    draw,
    font = \small,
  ]at([yshift=1.75ex,xshift=-23ex]legendpos)
  {\ref{plots:Smooth-GL}& {\Large Smooth-GL \cite{kalofolias2016learn}}&[5pt]
  \ref{plots:bi-app}& {\Large Linear Approx. by \eqref{eq:net-bilevel-app}} \\
  \ref{plots:xiaowen}& {\Large QuadGame GL \cite{leng2020learning}} &[5pt] 
  \ref{plots:spec}& {\Large SpecTemp \cite{segarra2017network}} \\
  \ref{plots:bi-x}& {\Large GLGP $f(x) = x$}&[5pt]
\ref{plots:bi-log}& {\Large GLGP $f(x) = \log(1+x)$}\\};

\end{tikzpicture}
\end{center}
\caption{Performance of \eqref{eq:glfp} via TTGD on learning from PA graphs. (Left) AUC performance. (Right) Social welfare.}\vspace{-.3cm}\label{fig:cr_game}
\end{figure}

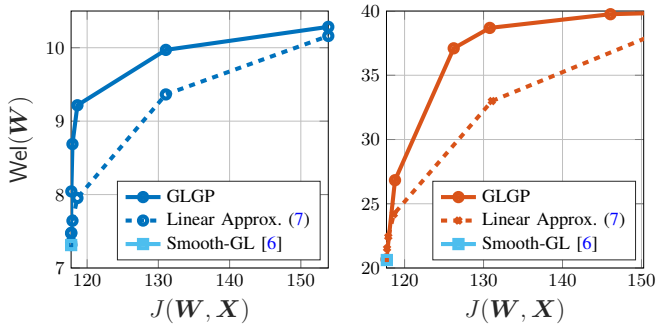
\begin{figure}[t]
\centering
\resizebox{0.99\linewidth}{!}{\definecolor{mycolor1}{rgb}{0.00000,0.44700,0.74100}%
\definecolor{mycolor2}{rgb}{0.85000,0.32500,0.09800}%
\definecolor{mycolor6}{rgb}{0.30100,0.74500,0.93300}%

\begin{tikzpicture}

\begin{groupplot}[group style={group name=myplot,group size=2 by 1}]
\nextgroupplot[%
width=6cm,
height=6cm,
xmin=117.749304433409,
xmax=153.856055193291,
xlabel style={font=\color{white!15!black}},
xlabel={\Large $J( \GSO, {\bm X} )$},
ymin=7,
ymax=10.5,
ylabel style={font=\color{white!15!black}},
ylabel={\Large ${\sf Wel}(\GSO)$},
axis background/.style={fill=white},
xmajorgrids,
ymajorgrids,
legend pos = south east,
legend style={legend cell align=left, align=left, draw=white!15!black}
]
\addplot [color=mycolor1, line width=2.0pt, mark=o, mark options={solid, mycolor1}]
  table[row sep=crcr]{%
117.755985796927	7.31405422792078\\
117.749304433409	7.47184527708778\\
117.782104858476	8.04305794038479\\
117.940064174086	8.68789807765267\\
118.636172592594	9.21653338208747\\
131.052579911335	9.97051561007478\\
153.856055193291	10.284132671904\\
};
\addlegendentry{GLGP}

\addplot [color=mycolor1, dashed, line width=2.0pt, mark=o, mark options={solid, mycolor1}]
  table[row sep=crcr]{%
117.755985796927	7.31230134894373\\
117.749304433409	7.34558697423854\\
117.782104858476	7.47979454160738\\
117.940064174086	7.64365280256893\\
118.636172592594	7.95344765258109\\
131.052579911335	9.36391026366383\\
153.856055193291	10.1600951608408\\
};
\addlegendentry{Linear Approx. \eqref{eq:net-bilevel-app}}

\addplot[
    color=mycolor6,
    line width=2.0pt,
    mark=square,
    mark options={solid, mycolor6}
    ]
    coordinates {
    (117.755985796927,7.31230134894373)
    };
    \addlegendentry{Smooth-GL \cite{kalofolias2016learn}}

\nextgroupplot[%
width=6cm,
height=6cm,
xmin=117.710637041695,
xmax=150.279764294101,
xlabel style={font=\color{white!15!black}},
xlabel={\Large $J( \GSO, {\bm X} )$},
ymin=20,
ymax=40,
ylabel style={font=\color{white!15!black}},
axis background/.style={fill=white},
xmajorgrids,
ymajorgrids,
legend pos = south east,
legend style={legend cell align=left, align=left, draw=white!15!black}
]
\addplot [color=mycolor2, line width=2.0pt, mark=o, mark options={solid, mycolor2}]
  table[row sep=crcr]{%
117.710637041695	20.6718312723392\\
118.788928778172	26.8324380236931\\
126.222203303195	37.0975231907485\\
130.790878925626	38.6854207821264\\
146.073861920048	39.751973328749\\
161.279764294101	39.9989700303178\\
};
\addlegendentry{GLGP}

\addplot [color=mycolor2, dashed, line width=2.0pt, mark=x, mark options={solid, mycolor2}]
  table[row sep=crcr]{%
117.755985796927	20.6200351322813\\
117.749304433409	20.7980319136741\\
117.782104858476	21.5245812471704\\
117.940064174086	22.4298416785578\\
118.636172592594	24.1909329777626\\
131.052579911335	32.9902296181939\\
153.856055193291	38.7289883933107\\
};
\addlegendentry{Linear Approx. \eqref{eq:net-bilevel-app}}

\addplot[
    color=mycolor6,
    line width=2.0pt,
    mark=square,
    mark options={solid, mycolor6}
    ]
    coordinates {
    (117.755985796927,20.6200351322813)
    };
    \addlegendentry{Smooth-GL \cite{kalofolias2016learn}}



\end{groupplot}
\end{tikzpicture}
 \caption{Comparing the social welfare ${\sf Wel}(\GSO) = {\bf 1}^\top {\bm y}^{\sf NE}(\GSO)$ against the data fidelity term $J( \GSO, {\bm X})$ for GLGP. (Left) Prior with $f(x) = \log(1+x)$. (Right) Prior with $f(x) = x$.} \vspace{-.4cm} \label{fig:nume-pareto} 
\end{figure}

We benchmark the proposed \eqref{eq:glfp} formulation against two methods: (i) the Smooth-GL method  \cite{kalofolias2016learn} as in \eqref{eq:graph_learn} without any structural nor network games regularization (referred as {\sf Smooth-GL}), (ii) the linear approximation of the bilevel problem \eqref{eq:net-bilevel-app}, where the optimization problem is simplified into a single level problem by approximating \( {\bf 1}^\top \bm{y}^\star (\GSO) \) as \( {\bf 1}^\top \GSO \bm{b} \) (referred as {\sf linear approx.}). For the bilevel optimization problem \eqref{eq:glfp}, the network game is specified with the marginal benefit {${\bm b} \geq {\bm 0}$, ${\bm b} =\max({\bm v}_1,0)$} set as the top eigenvector ${\bm v}_1$ of the Euclidean distance matrix, ${\bm D}$, of the observed graph signals, i.e., $D_{ij} = \| {\bm x}_i^{\rm row} - {\bm x}_j^{\rm row} \|^2$. The marginal benefit vector ${\bm b}$ is further normalized such that ${\bm b}^\top {\bf 1} = 1$. Furthermore, we consider two settings for the interaction function $f(\cdot)$ in \eqref{eq:glfp} wih $f(x) = x, f(x) = \log(1+x)$.
The bilevel optimization problem is tackled using the TTGD algorithm with {step sizes $(\alpha, \gamma) = (0.5,0.003)$ and terminated after $700$ iterations for  $f(x) = x$ and $195$ iterations for $f(x) = \log(1+x)$}.

Fig.~\ref{fig:cr_game} (left) shows the area under ROC curve (AUC) for graph learning performance and the total social welfare of the learnt $\GSO$ against regularization parameter $\lambda$, averaged over 20 Monte-Carlo trials. As $\lambda$ increases, the regularized graph learning objective will become more dependent on the network games prior and tend to learn a graph topology with few hub nodes [cf.~Proposition~\ref{prop:kkt}]. Observe that with $M \ll N$, it is desired to incorporate the network games prior as we observe that tackling \eqref{eq:glfp} yields a solution with better AUC and social welfare than benchmark approaches such as Smooth-GL, GLSigRep. 
Tackling \eqref{eq:glfp} using the prior with the interaction function $f(x) = x$ performs best in this case. Furthermore, we observe from Fig.~\ref{fig:cr_game} (right) that the graph topology found by \eqref{eq:glfp} achieves the highest social welfare.

Our second experiment considers learning the topology of the \texttt{Karate Club} graph, which consists of \( N = 34 \) nodes, with \( M = 50 \) samples of smooth graph signals generated from the Gaussian Markov Random Field (GMRF) model with the precision matrix given by the graph Laplacian \cite{dong2016learning}. Our aim is to showcase the necessity of tackling the bilevel problem \eqref{eq:glfp} using TTGD in lieu of the single-level optimization approximation \eqref{eq:net-bilevel-app}. Fig.~\ref{fig:nume-pareto} shows the \emph{Pareto front} of the bilevel solution and the approximate solution, computed by varying the regularization parameter \( \lambda \) that trades between the smooth-GL objective $J(\GSO, \bm{X})$ and ${\sf Wel}( \GSO )$. As expected, we observe that the TTGD algorithm achieves a better Pareto front than the approximate solution. 


\begin{table}[t]
\centering
\begin{tabular}{lccc}
\toprule
 & \textbf{Maximum} & \textbf{Average} & \textbf{Minimum} \\ 
  \midrule
GLGP ($f(x) = x$)  &   0.6345              &   0.5788              &  0.5271               \\ 
GLGP ($f(x) = \log(1+x)$)  &    {\bf 0.6570}             & {\bf  0.5937}              &   \bfseries 0.5490              \\ 
\midrule 
Linear Approx. by \eqref{eq:net-bilevel-app}  &   0.6429              &   0.5600               &    0.5087              \\ 
Smooth-GL \cite{kalofolias2016learn} &     0.5075             &    0.4888             &    0.4777             \\ 
QuadGame-GL\cite{leng2020learning} &    0.5735             &    0.5419             &    0.5120             \\ 
SpectTemp \cite{segarra2017network} & 0.5760 & 0.5187 & 0.4463\\
  \bottomrule
\end{tabular}
\caption{Comparing the AUC performance of the graph learnt from the {\tt IndianVillage} data \cite{banerjee2013diffusion}.}\vspace{-.1cm} \label{tab:auroc}
\end{table}
           
\begin{table}[t]
\centering
\begin{tabular}{lccc}
\toprule
${\sf Wel}( \hat{\GSO} ) - {\sf Wel}( {\GSO}^{\sf true} )$\footnote{$f(\cdot)$ specifies the interaction function that the welfare is evaluated on.} & \textbf{Maximum} & \textbf{Average} & \textbf{Minimum} \\ 
\midrule
GLGP ($f(x) = x$)         &   \bfseries 4.2754              &   {\bf 3.2077 }              &  \bfseries 2.1693              \\ 
Linear Approx.~by \eqref{eq:net-bilevel-app}  &    -0.4790          &    -1.4241             &     -2.3927            \\ 
Smooth-GL \cite{kalofolias2016learn} &     -2.3288            &     -5.8141            &     -8.6115            \\ 
QuadGame-GL \cite{leng2020learning} &     -1.9386            &     -3.5631            &     -5.5949            \\ 
SpectTemp \cite{segarra2017network}  &     0.2653            &     -2.8122            &     -6.6532            \\ 
\midrule
GLGP ($f(x) = \log(1+x)$)         &   \bfseries 1.4464              &  {\bf 1.1892 }             &  \bfseries 0.8461              \\ 
Linear Approx.~by \eqref{eq:net-bilevel-app} &   -0.0676              &    -0.6144              &  -1.1429               \\ 
Smooth-GL \cite{kalofolias2016learn}  &   -0.5440              &   -2.8292              &  -5.0419               \\ 
QuadGame-GL \cite{leng2020learning}  &   -0.8206              &   -1.6814              &  -3.3074                \\ 
SpectTemp \cite{segarra2017network} &     0.1453            &     -1.2226            &     -3.6324            \\
  \bottomrule
\end{tabular}
\caption{Comparing the gain in social welfare of the graph learnt from the {\tt IndianVillage} data \cite{banerjee2013diffusion}. Top (resp.~bottom) rows are evaluated using $f(x)=x$ (resp.~$f(x)=\log(1+x)$).}\vspace{-.4cm} \label{tab:wel}
\end{table}

\vspace{-0.0cm}
\subsection{Case Studies with Real Data}
Our last set of experiments consider learning the graph topology from a set of real data taken from Indian villages \cite{banerjee2013diffusion}. The dataset consists of survey data from $40$ villages, where the village networks have sizes ranging from $N=77$ to $N=330$ agents, and $M=16$ samples of graph signals are observed for each network. 

We set the parameters in \eqref{eq:glfp} with \(\beta = 1\), \({\bm b} = {\tt h} + 1\), where \( {\tt h}_i \in \{0,1\} \) indicates if the agent is a potential microfinance client (cf.~`{\tt hhSurveyed}' in the dataset) and ${\bm b}$ is normalized such that ${\bm b}^\top {\bm 1}=1$. Additionally, we set \(\lambda = 100\) and apply the TTGD algorithm with a stepsize of \((\alpha, \gamma) = (0.5, 0.003)\) for $760$ (with $f(x)=x$) and $320$  (with $f(x)=\log(x+1)$) iterations. 

Table~\ref{tab:auroc} reports the maximum/average/minimum AUC performances among the graph topologies learnt under various settings and algorithms, compared to the ground truth in \cite{banerjee2013diffusion}. We notice that GLGP consistently estimates the graph topologies accurately when used with the interaction function $f(x) = \log(1+x)$. 
Similarly, Table~\ref{tab:wel} reports the maximum/average/minimum gain in social welfare compared to the ground truth, for the graph topologies learnt. As expected, GLGP finds graph topologies that achieve better performance in the social welfare.


\section{Conclusions}
We have proposed a novel  graph topology learning formulation that incorporates prior knowledge of dynamics over the networks. We study a prior induced by general linear quadratic games and formulate a bilevel program that can be tackled by a TTGD algorithm. We envisage that the proposed GLGP framework can inspire new graph learning formulations using domain knowledge from GSP, network dynamics, and game theory. 

{
\bibliographystyle{IEEEtran}
\bibliography{ref}
}

\newpage 

\allowdisplaybreaks
\appendices

\section{Proof of Proposition~\ref{prop:approx}} \label{pf:approx}
\begin{proof} 
To facilitate our proof, we denote $\bar{\cal S}$ as the set of $(\GSO, {\bm y})$ feasible to \eqref{eq:glfp}.
We first note that for any $\GSO \in {\cal S}$, it holds $\GSO {\bf f}( {\bm y} ) + {\bm b} \geq {\bm 0}$ and the equilibrium constraint is equivalent to
\beq 
{\bm y} = \GSO {\bf f}( {\bm y} ) + {\bm b}
\eeq 
Since $f(x) \leq x$ and $\GSO \in {\cal S}$ is non-negative, the above constraint can be \emph{relaxed} to 
\beq 
{\bm y} \leq \GSO {\bm y} + {\bm b} \Longrightarrow {\bm y} \leq ( {\bm I} - \GSO )^{-1} {\bm b}
\eeq 
The above can be further relaxed to:
\[
 \begin{aligned}
   {\bm y} \leq ({{\bm I}-\GSO})^{-1} {\bm b} & \leq ({{\bm I}+\GSO}){\bm b}+ \| {\bm b} \|_\infty (c^2+c^3+\ldots){\bm 1} \\
  &= ({{\bm I}+\GSO}){\bm b}+ {(1-c)^{-1} c^2 \| {\bm b} \|_\infty} {\bm 1}
 \end{aligned}
\]
Denoting $\bar{\bm c} = {(1-c)^{-1} c^2 \| {\bm b} \|_\infty} {\bf 1} + {\bm b}$, we have 
\beq 
\bar{\cal S} \subseteq \bar{\cal S}_{\sf relax} := 
\{ ( \GSO, {\bm y} ) : \GSO \in {\cal S},~{\bm y} \leq \GSO {\bm b}+ \bar{\bm c} \}.
\eeq 
Subsequently, the following problem is a relaxation of \eqref{eq:glfp}:
\beq \textstyle \label{eq:net-bilevel-app-cons}
\min_{ (\GSO, {\bm y}) \in \bar{\cal S}_{\sf relax} } J( \GSO; {\bm X} ) - \lambda {\bf 1}^\top {\bm y}
\eeq 
Applying the optimality principle for ${\bm y}$ shows that \eqref{eq:net-bilevel-app-cons} is equivalent to \eqref{eq:net-bilevel-app}. 
\end{proof}

\section{Proof of Proposition~\ref{prop:kkt}} \label{pf:kkt}

\begin{proof}
To simplify notations, we define the Euclidean distance matrix ${\bm D} \in \RR^{N \times N}$ with ${ D}_{ij} = \frac{\| {\bm x}_i^{\text{row}} - {\bm x}_j^{\text{row}} \|^2}{2M}$.
Introducing the dual variables $\lambda \in \RR, {\bm \eta} \in \mathbb{R}^N$, ${\bm h} \in \mathbb{R}^N$, $\boldsymbol{\mu} \in \mathbb{R}_+^{N \times N}$, the  Lagrangian function of the approximate GLGP problem \eqref{eq:net-bilevel-app} is given by:
\beq \notag
\begin{aligned}
& {\cal L}( \GSO, \lambda, \bm{\eta}, \bm{\mu}, {\bm h} ) \\
&= \langle \GSO, {\bm D} \rangle + \beta \| \GSO \|_{\text{F}}^2 - \lambda {\bm 1}^\top \GSO{\bm b} \\
  & \textstyle \quad - {\bm \eta}^\top(\GSO{\bm 1} - c{\bm 1}) - \langle \boldsymbol{\mu}, \GSO \rangle-\sum_{i=1}^N h_i{\bm e}_i^\top \GSO {\bm e}_i.
\end{aligned}
\eeq
We consider the first order necessary condition, i.e., setting the gradient of ${\cal L}(\cdot)$ w.r.t.~$\GSO$ to zero. This yields:
\[
{\bm D} + 2\beta \GSO - \lambda {\bm 1}{\bm b}^\top - \boldsymbol{\mu} - {\bm \eta}{\bm 1}^\top- {\rm Diag}( {\bm h} )
= {\bm 0}.
\]
On the other hand, the complementary slackness condition yields $W_{ij} \mu_{ij} = 0$. Together with the primal feasibility $\GSO \geq {\bm 0}$, we obtain the following conditions for $i \neq j$:

\vspace{.1cm}
\noindent {\bf Case 1.} If $W_{ij} > 0$, we must have $\mu_{ij} = 0$ and thus
\[
    W_{ij} = -\frac{{ D}_{ij}}{2 \beta} + \frac{\lambda { b}_j}{2 \beta} + \frac{\eta_i}{2 \beta},
\]
\noindent {\bf Case 2.} If $W_{ij} = 0$, with $\mu_{ij} \geq 0$, we get 
\[
    -\frac{{ D}_{ij}}{2 \beta} + \frac{\lambda { b}_j}{2 \beta} + \frac{\eta_i}{2 \beta} \leq 0.
\]
Therefore, for $i \neq j$, the optimal solution for $\GSO$ is:
\[
W^*_{ij} = \frac{1}{2\beta} \max\left( \eta_i - \frac{\| {\bm x}_i^{\text{row}} - {\bm x}_j^{\text{row}} \|^2}{2M} + \lambda b_j, 0 \right).
\]
This concludes our proof.
\end{proof}

\section{Proof of Theorem~\ref{thm:ttgd}} \label{app:ttgd}
The precise statement of the theorem is repeated as follows:
\begin{Theorem*} \notag
Assume H\ref{assu:lipsf}, then with the step sizes $\alpha = \frac{1-c}{(1+c)^2}$ and $\gamma \leq \min\{\frac{3}{4L_{\ell}}, \frac{1-c}{4LL_y}\alpha\}$, it holds for any $K \geq 1$,
\beq \notag
\min_{k=1,...,K} \| \gamma^{-1} ( \GSO^k - {\sf Proj}_{\cal S} ( \GSO^k - \gamma \grd \ell( \GSO^k ) ) ) \|^2 = {\cal O}(K^{-1}).
\eeq 
Here, $L_\ell = \left(\frac{\sqrt{N}\lambda}{1-c} + \frac{\lambda N\|\bm{b}\|_{\infty}c}{(1-c)^3}\right)\frac{\sqrt{N}\|\bm{b}\|_{\infty}}{(1-c)^2} + 2\beta + \frac{\lambda N\|\bm{b}\|_{\infty}}{(1-c)^3}$ is the Lipschitz constant of $\grd\ell$, $L_y = \frac{\sqrt{N}\|\bm{b}\|_{\infty}}{(1-c)^2}$ is the Lipschitz constant of $\bm{y}^{\sf NE}(\cdot)$ and $L = \frac{\sqrt{N}\lambda}{1-c} + \frac{\lambda \sqrt{N}c}{(1-c)^2}\left(\frac{\sqrt{N}\|\bm{b}\|_{\infty}}{1-c} + \sqrt{\left(1-\frac{\alpha(1-c)}{2}\right)\|\bm{y}^{1} - \bar{\bm{y}}^{0}\|^2 + 2cL_y^2(\frac{2}{(1-c)\alpha}-1)}\right)$.
\end{Theorem*}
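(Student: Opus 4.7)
The plan is to treat \eqref{eq:glfp} as a single-level minimization of $\ell(\GSO) := \Phi(\GSO, {\bm y}^{\sf NE}(\GSO))$ and carry out an inexact projected-gradient analysis in which the lower-level tracking error is controlled through a Lyapunov function. Three Lipschitz-type estimates are needed up front. Because $f$ is $1$-Lipschitz and $\GSO{\bf 1} = c{\bf 1}$ with $c<1$, the operator ${\sf T}(\cdot;\GSO)$ is a $c$-contraction, so ${\bm y}^{\sf NE}(\GSO)$ is well defined; differentiating the identity ${\bm y}^{\sf NE} = \GSO f({\bm y}^{\sf NE}) + {\bm b}$ and invoking the Neumann series $({\bm I} - {\rm J}_{\bm y} {\sf T})^{-1} = \sum_{j\geq 0}({\rm J}_{\bm y} {\sf T})^j$, which converges because $\|{\rm J}_{\bm y}{\sf T}\| \leq c$, extracts the Lipschitz constant $L_y$ for ${\bm y}^{\sf NE}(\cdot)$ appearing in the theorem. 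Substituting these expressions into the hypergradient formula \eqref{eq:grad_ideal} together with $|f''|\leq 1$ from H\ref{assu:lipsf} produces the Lipschitz constant $L_\ell$ of $\grd\ell$; the same manipulation, holding $\GSO$ fixed and varying only ${\bm y}$, gives a constant $L$ such that $\|\widehat{\grd}\Phi(\GSO,{\bm y}) - \widehat{\grd}\Phi(\GSO,{\bm y}')\| \leq L\|{\bm y}-{\bm y}'\|$. In particular, with $\bar{\bm y}^k := {\bm y}^{\sf NE}(\GSO^k)$, one gets the gradient-inexactness bound $\|\widehat{\grd}\Phi(\GSO^k,{\bm y}^{k+1}) - \grd\ell(\GSO^k)\| \leq L\|{\bm y}^{k+1} - \bar{\bm y}^k\|$.

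With these in hand, two coupled recursions drive the analysis. For the upper level, applying the $L_\ell$-smooth descent lemma at $\GSO^{k+1}$, combining with the projection property $\langle \widehat{\grd}\Phi(\GSO^k,{\bm y}^{k+1}),\widehat{G}^k\rangle \geq \|\widehat{G}^k\|^2$ for $\widehat{G}^k := \gamma^{-1}(\GSO^k - \GSO^{k+1})$, and splitting the inexactness term via Young's inequality yields, under $\gamma \leq 3/(4L_\ell)$,
\[
\ell(\GSO^{k+1}) \leq \ell(\GSO^k) - \tfrac{\gamma}{4}\|\widehat{G}^k\|^2 + \tfrac{\gamma L^2}{2}\Delta^{k+1}, \quad \Delta^{k+1} := \|{\bm y}^{k+1} - \bar{\bm y}^k\|^2.
\]
For the lower level, $\bar{\bm y}^k$ is a fixed point of ${\sf T}(\cdot;\GSO^k)$ and the averaged step $(1-\alpha){\bm y}^k + \alpha{\sf T}({\bm y}^k;\GSO^k)$ contracts toward $\bar{\bm y}^k$ at rate $1-\alpha(1-c)$; the specific choice $\alpha = (1-c)/(1+c)^2$ ensures this rate is strictly below $1$. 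A triangle inequality against the Lipschitz drift $\|\bar{\bm y}^k - \bar{\bm y}^{k-1}\| \leq L_y\gamma\|\widehat{G}^{k-1}\|$, followed by a Young inequality with parameter $\epsilon = \alpha(1-c)/2$, produces
\[
\Delta^{k+1} \leq (1-\rho)\Delta^k + A\gamma^2\|\widehat{G}^{k-1}\|^2,
\]
with $\rho = \Theta(\alpha(1-c))$ and $A = \Theta(L_y^2/(\alpha(1-c)))$.

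These recursions are fused through the Lyapunov function $V_k := \ell(\GSO^k) + C\Delta^k$ with $C = \gamma L^2/\rho$ chosen so the coefficient of $\Delta^k$ on the right-hand side becomes non-positive. This leaves a one-step inequality of the form $V_{k+1} - V_k \leq -\tfrac{\gamma}{4}\|\widehat{G}^k\|^2 + C'\gamma^3\|\widehat{G}^{k-1}\|^2$ with $C' = \Theta(L^2L_y^2/(\alpha^2(1-c)^2))$. The second step-size hypothesis $\gamma \leq \tfrac{1-c}{4LL_y}\alpha$ is exactly what forces $C'\gamma^2 \leq 1/8$, so telescoping over $k=1,\dots,K$ collapses the right-hand side to $-\tfrac{\gamma}{8}\sum_k\|\widehat{G}^k\|^2$ plus one boundary $\|\widehat{G}^0\|^2$ term. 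Since $\ell$ is bounded below on ${\cal S}$ and $\Delta^1$ is finite, this delivers $\sum_{k=1}^K\|\widehat{G}^k\|^2 + \sum_{k=1}^K\Delta^{k+1} = O(1)$. Converting to the exact gradient mapping $G^k := \gamma^{-1}(\GSO^k - {\sf Proj}_{\cal S}(\GSO^k - \gamma\grd\ell(\GSO^k)))$ uses non-expansiveness of projection and the inexactness bound to give $\|G^k\|^2 \leq 2\|\widehat{G}^k\|^2 + 2L^2\Delta^{k+1}$, so taking the minimum over $k$ yields the claimed $O(1/K)$ rate. The principal obstacle is the double-bookkeeping step: tracking the explicit constants $L_\ell, L, L_y$ through the Neumann-series expansion, and simultaneously aligning the Young parameter $\epsilon$ with the Lyapunov weight $C$ so that \emph{both} the $\Delta^k$ term and the one-step-delayed $\|\widehat{G}^{k-1}\|^2$ term can be absorbed by the step-size budget; once that balance is struck, the remainder is mechanical telescoping in the spirit of \cite{hong2023two}.
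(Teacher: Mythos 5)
Your proposal is correct and follows essentially the same route as the paper: establish the Lipschitz constants $L_y$, $L$, $L_\ell$ for the NE map, the inexact hypergradient, and $\grd\ell$; derive the coupled descent/tracking recursions for $\ell(\GSO^k)$ and $\Delta^{k+1}=\|{\bm y}^{k+1}-\bar{\bm y}^k\|^2$; combine them under the stated step-size conditions (the paper invokes a generic coupled-recursion lemma adapted from \cite{hong2023two} where you use an equivalent Lyapunov function $V_k=\ell(\GSO^k)+C\Delta^k$); and finally convert to the exact gradient mapping via non-expansiveness of the projection, $\|G^k\|^2\leq \frac{2}{\gamma^2}\|\GSO^k-\GSO^{k+1}\|^2+2L^2\Delta^{k+1}$. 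The one detail you gloss over is that $\widehat{\grd}\Phi(\GSO,\cdot)$ is \emph{not} globally Lipschitz in ${\bm y}$ (its Lipschitz modulus scales with $\|f({\bm y})\|$ through the term $f({\bm y})\otimes{\bm I}_N$), so the constant $L$ in the statement is only valid on a bounded set; the paper closes this by using the tracking recursion itself to show $\{{\bm y}^k\}$ stays in a ball of radius $B$, which is exactly where the $\sqrt{(1-\alpha(1-c)/2)\|{\bm y}^1-\bar{\bm y}^0\|^2+2cL_y^2(\tfrac{2}{(1-c)\alpha}-1)}$ term in the definition of $L$ comes from, and your argument needs that same step to be complete.
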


For the sake of brevity, we define a mapping ${\sf G_{\GSO}} : \mathbb{R}^n_+ \rightarrow \mathbb{R}^n_+$ for any feasible $\GSO$ as $\sf G_{\GSO}(\bm{y}) := \bm{y} - {\sf T}(\bm{y};\GSO)$. The properties of $\sf G$ are discussed in the follow lemma. With a slight modification to Lemma 4 in \cite{Candogan}, it can be shown that $\rho(\GSO) \leq c$ for all feasible $\GSO \in \mathcal{S}$. 
\begin{Lemma}\label{lemma:property-G}
    Assume $\bm{b} \geq \bm{0}$ and $c < 1$, for any $\GSO \in \mathcal{S}$, we have
    \begin{enumerate}
        \item $\|{\sf G_{\GSO}}(\bm{y}_1) - {\sf G_{\GSO}} (\bm{y}_2)\| \leq (1+c)\|\bm{y}_1 - \bm{y}_2\|, \forall \bm{y}_1, \bm{y}_2 \in \mathbb{R}_+^n$;
        \item $\langle{\sf G_{\GSO}}(\bm{y}_1) - {\sf G_{\GSO}} (\bm{y}_2), \bm{y}_1 - \bm{y}_2\rangle \geq (1-c)\|\bm{y}_1 - \bm{y}_2\|^2, \forall \bm{y}_1, \bm{y}_2 \in \mathbb{R}_+^n$.
    \end{enumerate}
\end{Lemma}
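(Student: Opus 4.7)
The plan is to reduce both inequalities to a single intermediate contraction bound on the best-response operator itself, namely $\|{\sf T}(\bm{y}_1;\GSO) - {\sf T}(\bm{y}_2;\GSO)\| \leq c\|\bm{y}_1-\bm{y}_2\|$ for all $\bm{y}_1,\bm{y}_2 \in \mathbb{R}_+^n$. Once this contraction is in hand, both parts of the lemma follow from one-line applications of the identity ${\sf G_{\GSO}}(\bm{y}) = \bm{y} - {\sf T}(\bm{y};\GSO)$: triangle inequality for Part 1 and Cauchy--Schwarz for Part 2.

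I would build the contraction in two stages. First, a componentwise inequality: using that $x \mapsto \max\{0,x\}$ is $1$-Lipschitz on $\mathbb{R}$ (so $|\max\{0,a\}-\max\{0,b\}| \leq |a-b|$), followed by the triangle inequality on the sum in \eqref{eq:ne_form}, the $1$-Lipschitzness of $f$ from H\ref{assu:lipsf}, and the constraint $W_{ii}=0$ allowing us to merge the ``$j\neq i$'' sum with a sum over all $j$, one obtains the entrywise bound $|{\sf T}_i(\bm{y}_1;\GSO) - {\sf T}_i(\bm{y}_2;\GSO)| \leq \sum_j W_{ij}|y_{1,j}-y_{2,j}|$. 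Writing $\bm{d} = (|y_{1,j}-y_{2,j}|)_j \geq \bm{0}$, squaring and summing over $i$ yields $\|{\sf T}(\bm{y}_1;\GSO)-{\sf T}(\bm{y}_2;\GSO)\|^2 \leq \|\GSO\bm{d}\|^2$. Second, I would invoke the cited modification of Lemma~4 in \cite{Candogan}: under $\GSO\in\mathcal{S}$ and $c<1$, combined with the non-negativity of $\GSO$ and of $\bm{d}$, one has $\|\GSO\bm{d}\| \leq c\|\bm{d}\| = c\|\bm{y}_1-\bm{y}_2\|$, completing the contraction bound.

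With $\|{\sf T}(\bm{y}_1;\GSO)-{\sf T}(\bm{y}_2;\GSO)\| \leq c\|\bm{y}_1-\bm{y}_2\|$ established, Part 1 is immediate from the triangle inequality applied to ${\sf G_{\GSO}}(\bm{y}_1)-{\sf G_{\GSO}}(\bm{y}_2) = (\bm{y}_1-\bm{y}_2) - ({\sf T}(\bm{y}_1;\GSO)-{\sf T}(\bm{y}_2;\GSO))$, producing the stated constant $1+c$. For Part 2, I would expand the inner product as $\|\bm{y}_1-\bm{y}_2\|^2 - \langle {\sf T}(\bm{y}_1;\GSO)-{\sf T}(\bm{y}_2;\GSO),\, \bm{y}_1-\bm{y}_2\rangle$, then apply Cauchy--Schwarz to the cross term and invoke the contraction to lower-bound it by $(1-c)\|\bm{y}_1-\bm{y}_2\|^2$.

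The main technical obstacle, and the only non-routine step, is the second stage: promoting the elementwise bound $|{\sf T}_i(\bm{y}_1;\GSO) - {\sf T}_i(\bm{y}_2;\GSO)| \leq (\GSO\bm{d})_i$ to a Euclidean-norm contraction with constant $c$ on a possibly non-symmetric $\GSO$. The row-sum condition $\GSO\bm{1}=c\bm{1}$ immediately gives $\|\GSO\bm{d}\|_\infty \leq c\|\bm{d}\|_\infty$, but translating this to the $\ell_2$-norm needed in the lemma requires exploiting the non-negativity of $\bm{d}$ and $\GSO$; this is precisely the content of the cited modification of Lemma~4 in \cite{Candogan}, which I would quote rather than re-derive, after which the two claims follow in the brief manner described above.
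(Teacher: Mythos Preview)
Your proposal is correct and follows essentially the same route as the paper: establish the contraction $\|{\sf T}(\bm{y}_1;\GSO)-{\sf T}(\bm{y}_2;\GSO)\|\leq c\|\bm{y}_1-\bm{y}_2\|$ via the $1$-Lipschitzness of $f$ and the spectral bound $\rho(\GSO)\leq c$ from the cited modification of Lemma~4 in \cite{Candogan}, then deduce Part~1 by the triangle inequality and Part~2 by Cauchy--Schwarz. The only cosmetic difference is that the paper first observes that $\bm{b}\geq\bm{0}$, $\GSO\geq\bm{0}$, and $f\geq 0$ force ${\sf T}(\bm{y};\GSO)=\bm{b}+\GSO f(\bm{y})$ without the $\max\{0,\cdot\}$, so it works directly with $\GSO(f(\bm{y}_1)-f(\bm{y}_2))$ rather than passing through your componentwise bound with $\bm{d}$; both arrive at the same contraction.
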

\begin{proof}
    Since $\bm{b} \geq {0}$, $\sf T(\bm{y}, \GSO) = \bm{b} + \GSO f(\bm{y})$. Hence, 
    \[
    {\sf G_{\GSO}}(\bm{y}_1) - {\sf G_{\GSO}}(\bm{y}_2) = (\bm{y}_1 - \bm{y}_2) - \GSO(f(\bm{y}_1) - f(\bm{y}_2)).
    \]
    For the first argument, 
    \begin{equation*}
        \begin{aligned}
           \| {\sf G}_{\GSO}(\bm{y}_1) - {\sf G}_{\GSO}(\bm{y}_2)\| &\leq \|\bm{y}_1 - \bm{y}_2\| + \rho(\GSO)\|f(\bm{y}_1) - f(\bm{y}_2)\| \\
           &\leq (1+c)\|\bm{y}_1 - \bm{y}_2\|
        \end{aligned}
    \end{equation*}
    The last inequality is due to $\rho(\GSO) \leq c$ and $f$ is $1$-Lipschitz.
    The second argument can be derived by similar steps.
\end{proof}
The following series of lemmas characterize the properties of $\ell$ and $\widehat{\grd} \Phi$.
We recall that 
\begin{equation*}
    \begin{aligned}
        &\widehat{\grd} \Phi( \GSO ,\bm y ) \\
        = &\grd_{\GSO} \Phi( \GSO; \bm y ) - ( {\rm J}_{\GSO} {\sf T}( \bm y; \GSO) )^\top ( {\rm J}_{\bm y} {\sf T}( \bm y; \GSO) - {\bm I}_N )^{-\top} \grd_{\bm y} \Phi( \GSO; \bm y )
    \end{aligned}
\end{equation*}
We begin by simplifying the expression of $\widehat{\grd} \Phi( \GSO ,\bm y )$.
\begin{Lemma}\label{lemma: grdPhi-expression}
    Assume $\bm{b} \geq \bm{0}$, 
    \begin{equation*}
        \begin{aligned}
            \widehat{\grd} \Phi( \GSO ,\bm y ) = \grd_\GSO J(\GSO) - \lambda (f(\bm y)\otimes \bm{I}_N)(\bm{I}_N - \GSO {\sf Diag}(f'(\bm y)))^{-\top}\bm{1}
        \end{aligned}
    \end{equation*}
\end{Lemma}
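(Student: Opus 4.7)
The plan is to unpack the definition of $\widehat{\grd}\Phi$ by computing each of its ingredients for the specific $\Phi$ and $\sf{T}$ of \eqref{eq:glfp}, then collect terms. First, since $\Phi(\GSO,\bm y) = J(\GSO;\bm X) - \lambda \bm 1^\top \bm y$, the partial gradients are immediate: $\grd_\GSO \Phi(\GSO;\bm y) = \grd_\GSO J(\GSO)$ and $\grd_{\bm y}\Phi(\GSO;\bm y) = -\lambda \bm 1$. Next, because $\bm b \geq \bm 0$, $\GSO \in \mathcal{S}$ is entrywise non-negative, and $f(\bm y) \geq \bm 0$ (as $f$ is non-decreasing with $f(0)=0$), the argument of the $\max$ in \eqref{eq:ne_form} is non-negative, so the fixed-point operator simplifies to the affine form $\sf{T}(\bm y;\GSO) = \bm b + \GSO f(\bm y)$, eliminating any nonsmoothness.

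With the closed form $\sf{T}(\bm y;\GSO)=\bm b+\GSO f(\bm y)$ in hand, I would read off the two Jacobians. Differentiating in $\bm y$ gives $\rm J_{\bm y}\sf{T}(\bm y;\GSO) = \GSO\,{\sf Diag}(f'(\bm y))$. For the Jacobian in $\GSO$, I would vectorize columnwise and use the Kronecker identity $\GSO f(\bm y) = (f(\bm y)^\top \otimes \bm I_N)\,\text{vec}(\GSO)$, which yields $\rm J_\GSO \sf{T}(\bm y;\GSO) = f(\bm y)^\top \otimes \bm I_N$, whose transpose is $(\rm J_\GSO \sf{T})^\top = f(\bm y)\otimes \bm I_N$. (A direct entrywise check $\partial \sf T_i / \partial W_{kl} = \delta_{ik}f(y_l)$ also confirms this.) Substituting these pieces into $\widehat{\grd}\Phi(\GSO,\bm y) = \grd_\GSO J(\GSO) - (\rm J_\GSO \sf T)^\top(\rm J_{\bm y}\sf T - \bm I_N)^{-\top}\grd_{\bm y}\Phi$ yields
\[
\widehat{\grd}\Phi(\GSO,\bm y) = \grd_\GSO J(\GSO) - (f(\bm y)\otimes \bm I_N)\bigl(\GSO{\sf Diag}(f'(\bm y)) - \bm I_N\bigr)^{-\top}(-\lambda \bm 1).
\]
Finally, I would absorb the two minus signs through the identity $(\GSO{\sf Diag}(f'(\bm y)) - \bm I_N)^{-\top} = -(\bm I_N - \GSO{\sf Diag}(f'(\bm y)))^{-\top}$, producing the $-\lambda$ prefactor in the claimed formula.

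The only genuinely technical point is ensuring that $\bm I_N - \GSO{\sf Diag}(f'(\bm y))$ is invertible, so the inverse in the final expression is well-defined. This follows from H\ref{assu:lipsf}: 1-Lipschitzness of $f$ gives $\|{\sf Diag}(f'(\bm y))\|_\infty \leq 1$, while the row-sum constraint $\GSO\bm 1 = c\bm 1$ together with $\GSO \geq \bm 0$ gives $\|\GSO\|_\infty = c<1$; hence $\|\GSO{\sf Diag}(f'(\bm y))\|_\infty \leq c<1$, so the Neumann series converges and the inverse exists. This is essentially the main (and only) obstacle; the remainder of the proof is bookkeeping with Kronecker identities and signs, once the kink in the $\max$ is ruled out and the invertibility is confirmed.
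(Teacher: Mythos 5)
Your proof is correct and follows essentially the same route as the paper: compute $\grd_{\GSO}\Phi$ and $\grd_{\bm y}\Phi$, read off the two Jacobians of the affine operator ${\sf T}(\bm y;\GSO)=\bm b+\GSO f(\bm y)$, and substitute into the definition of $\widehat{\grd}\Phi$. You additionally spell out why the $\max$ in \eqref{eq:ne_form} is inactive and why $\bm I_N - \GSO\,{\sf Diag}(f'(\bm y))$ is invertible, details the paper's one-line ``combining them gives the results'' leaves implicit.
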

\begin{proof}
    Since ${\sf T}( \bm y; \GSO) ) = \bm{b} + \GSO f(\bm{y})$ when $\bm{b} \geq \bm{0}$, 
    \begin{equation*}
        \begin{aligned}
            &{\rm J}_{\GSO} {\sf T}( \bm y; \GSO) = f(\bm{y})^\top\otimes \bm{I}_N \\
            &{\rm J}_{\bm{y}} {\sf T}( \bm y; \GSO) = \GSO {\sf Diag}(f'(\bm y)).
        \end{aligned}
    \end{equation*}
    Since $\Phi( \GSO; \bm y ) = J(\GSO) - \lambda \bm{1}^\top\bm{y}$, 
        \begin{equation*}
        \begin{aligned}
&\grd_\GSO\Phi( \GSO; \bm y ) = \grd J(\GSO) \\
&\grd_{\bm y}\Phi( \GSO; \bm y ) = -\lambda \bm{1}.
        \end{aligned}
    \end{equation*}
    Combining them gives the results.
\end{proof}

\begin{Lemma}\label{lemma:property-yne}
    Under H\ref{assu:lipsf}, ${\bm y}^{\sf NE} ( \GSO )$ is $L_y$-Lipschitz w.r.t.~$\GSO$, where 
    \[ L_y = \frac{\sqrt{N}\|\bm{b}\|_{\infty}}{(1-c)^2}.
    \]
\end{Lemma}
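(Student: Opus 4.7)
The plan is to exploit the strong monotonicity of the game operator $\sf G_{\GSO}$ established in Lemma~\ref{lemma:property-G}, combined with the fixed-point characterization of the NE, to convert a change in $\GSO$ into a bound on the change in $\bm{y}^{\sf NE}(\GSO)$. Fix $\GSO_1, \GSO_2 \in \mathcal{S}$ and write $\bm{y}_i = \bm{y}^{\sf NE}(\GSO_i)$, so that ${\sf G}_{\GSO_i}(\bm{y}_i) = \bm{0}$ for $i=1,2$. Under H\ref{assu:lipsf} together with $\bm{b} \geq \bm{0}$, one has ${\sf T}(\bm{y};\GSO) = \bm{b} + \GSO f(\bm{y})$ (the nonnegative projection is inactive), so the two $\sf G$-maps differ only through the $\GSO f(\bm{y})$ term. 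Therefore
\[
{\sf G}_{\GSO_1}(\bm{y}_2) - {\sf G}_{\GSO_2}(\bm{y}_2) = -(\GSO_1 - \GSO_2) f(\bm{y}_2),
\]
and using ${\sf G}_{\GSO_2}(\bm{y}_2)=\bm{0}$ and ${\sf G}_{\GSO_1}(\bm{y}_1)=\bm{0}$ gives ${\sf G}_{\GSO_1}(\bm{y}_1) - {\sf G}_{\GSO_1}(\bm{y}_2) = (\GSO_1 - \GSO_2) f(\bm{y}_2)$.

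Next I would apply the second property of Lemma~\ref{lemma:property-G} (strong monotonicity of ${\sf G}_{\GSO_1}$ with parameter $1-c$) followed by Cauchy–Schwarz:
\[
(1-c)\|\bm{y}_1 - \bm{y}_2\|^2 \leq \langle (\GSO_1 - \GSO_2)f(\bm{y}_2),\, \bm{y}_1 - \bm{y}_2\rangle \leq \|\GSO_1 - \GSO_2\|_{\rm F}\,\|f(\bm{y}_2)\|\,\|\bm{y}_1 - \bm{y}_2\|,
\]
where I have bounded the operator norm by the Frobenius norm. Dividing by $\|\bm{y}_1-\bm{y}_2\|$ reduces the problem to controlling $\|f(\bm{y}_2)\|$ uniformly over $\GSO_2 \in \mathcal{S}$.

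For that bound, I would use the NE equation $\bm{y}_2 = \bm{b} + \GSO_2 f(\bm{y}_2)$ together with the facts $\bm{y}_2 \geq \bm{0}$, $\GSO_2 \geq \bm{0}$ and $f(x) \leq x$ from H\ref{assu:lipsf} to obtain the componentwise inequality $\bm{y}_2 \leq \bm{b} + \GSO_2 \bm{y}_2$, i.e.~$\bm{y}_2 \leq (\bm{I}-\GSO_2)^{-1}\bm{b}$ via the Neumann series (valid because $\GSO_2 \bm{1} = c\bm{1}$ with $c<1$ yields $\|\GSO_2\|_\infty = c$). This gives $\|\bm{y}_2\|_\infty \leq \|\bm{b}\|_\infty/(1-c)$ and hence $\|f(\bm{y}_2)\| \leq \sqrt{N}\,\|f(\bm{y}_2)\|_\infty \leq \sqrt{N}\,\|\bm{b}\|_\infty/(1-c)$. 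Plugging back yields
\[
\|\bm{y}_1 - \bm{y}_2\| \leq \frac{\sqrt{N}\,\|\bm{b}\|_\infty}{(1-c)^2}\,\|\GSO_1 - \GSO_2\|_{\rm F} = L_y\,\|\GSO_1 - \GSO_2\|_{\rm F},
\]
which is the claimed Lipschitz estimate.

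The only subtlety is ensuring that the $\max\{0,\cdot\}$ in \eqref{eq:ne_form} is inactive so that ${\sf T}$ is genuinely affine in $\GSO$; this is handled by the standing assumption $\bm{b} \geq \bm{0}$ (already used in Lemma~\ref{lemma:property-G} and Lemma~\ref{lemma: grdPhi-expression}). Everything else is a direct chaining of monotonicity, Cauchy–Schwarz, and the Neumann-series bound, so I do not anticipate any further obstacle.
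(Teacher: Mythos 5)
Your proof is correct and reaches the same constant, but it runs through a different mechanism than the paper. The paper works directly with the fixed-point map: it writes $\bm{y}_i = (\bm{b} + \GSO_i f(\bm{y}_i))^+$, uses non-expansiveness of the positive-part projection, peels off a factor $c\|\bm{y}_1-\bm{y}_2\|$ via the contraction property of ${\sf T}(\cdot;\GSO_1)$, and rearranges to get the prefactor $\tfrac{1}{1-c}$. You instead invoke the strong monotonicity of ${\sf G}_{\GSO_1}$ from Lemma~\ref{lemma:property-G}(2), identify ${\sf G}_{\GSO_1}(\bm{y}_1)-{\sf G}_{\GSO_1}(\bm{y}_2)=(\GSO_1-\GSO_2)f(\bm{y}_2)$ from the two equilibrium conditions, and extract the same $\tfrac{1}{1-c}$ via Cauchy--Schwarz; the remaining ingredient, the a priori bound $\|f(\bm{y}^{\sf NE})\|_\infty \le \|\bm{b}\|_\infty/(1-c)$, is obtained in both arguments (the paper by a direct $\ell_\infty$ self-bound, you by a Neumann-series comparison, which additionally needs the nonnegativity of $(\bm{I}-\GSO_2)^{-1}$ that you correctly have from $\GSO_2\ge\bm{0}$ and $c<1$). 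The two routes are dual views of the same estimate (contraction of ${\sf T}$ versus $(1-c)$-strong monotonicity of ${\bm I}-{\sf T}$); yours has the minor advantage of reusing Lemma~\ref{lemma:property-G}, which the paper proves but does not actually deploy here, while the paper's version is marginally more robust in that it keeps the $(\cdot)^+$ and only uses its non-expansiveness rather than arguing the projection is inactive. Your appeal to Lemma~\ref{lemma:property-G} is legitimate since equilibria lie in $\RR^N_+$, so no gap arises.
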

\begin{proof}
    Since $f$ is $1$-Lipschitz, for any $\bm{y}\geq \bm{0}$, we have
    \begin{equation*}
        f(\bm{y}) \leq \|\bm{y}\|_{\infty}\bm{1}.
    \end{equation*}
    From the definition of $\bm{y}^{\sf NE}$, we can infer that
    \begin{equation}
        \begin{aligned}
            \|\bm{y}^{\sf NE}\|_{\infty} &\leq \|\bm{b} + \GSO f(\bm{y}^{\sf NE})\|_{\infty} \\
            &\leq \|\bm{b}\|_{\infty} + \|\GSO\|_1\|f(\bm{y}^{\sf NE})\|_{\infty} \\
            &\leq \|\bm{b}\|_{\infty} + c\|\bm{y}^{\sf NE}\|_{\infty}. \\
            \Rightarrow \|\bm{y}^{\sf NE}\|_{\infty} &\leq \frac{1}{1-c} \|\bm{b}\|_{\infty}. \\
            \Rightarrow \|f(\bm{y}^{\sf NE})\|_{\infty} &\leq \frac{1}{(1-c)} \|\bm{b}\|_{\infty}.
        \end{aligned}
    \end{equation}
    For simplicity, denote ${\bm y}_1 = {\bm y}^{\sf NE}( \GSO_1 )$, ${\bm y}_2 = {\bm y}^{\sf NE}( \GSO_2 )$. We have
    \begin{equation*}
        \begin{aligned}
            \|\bm{y}_1 - \bm{y}_2\|_2 &= \|(\bm{b} + \GSO_1f(\bm{y}_1))^+ - (\bm{b} + \GSO_2f(\bm{y}_2))^+\|_2 \\
            &\leq  \|(\bm{b} + \GSO_1f(\bm{y}_1)) - (\bm{b} + \GSO_2f(\bm{y}_2))\|_2 \\
            &\leq \|\GSO_1\|\|\bm{y}_1 - \bm{y}_2\|_2 + \|f(\bm{y}_2)\|_2\|\GSO_1 - \GSO_2\| \\
            &\leq c\|\bm{y}_1 - \bm{y}_2\|_2 + \sqrt{N}\|f(\bm{y}_2)\|_{\infty}\|\GSO_1 - \GSO_2\| \\
            &\leq c\|\bm{y}_1 - \bm{y}_2\|_2 + \frac{\sqrt{N}\|\bm{b}\|_{\infty}}{1-c}\|\GSO_1 - \GSO_2\|. \\
            \Rightarrow \|\bm{y}_1 - \bm{y}_2\|_2 &\leq \frac{\sqrt{N}\|\bm{b}\|_{\infty}}{(1-c)^2}\|\GSO_1 - \GSO_2\|.
        \end{aligned}
    \end{equation*}
    This concludes the proof.
\end{proof}

\begin{Lemma}\label{lemma: property-hatphi}
For any $\GSO \in \mathcal{S}$ and $B > 0$, the gradient map $\widehat{\grd}\Phi(\GSO, {\bm y})$ is $L$-Lipschitz w.r.t. ${\bm y}$ for all ${\bm y} $ satisfying $\|\bm{y}\|\leq B$, where 
\[ 
L =\frac{\sqrt{N}\lambda}{1-c} + \frac{\lambda \sqrt{N}Bc}{(1-c)^2}.
\]
\end{Lemma}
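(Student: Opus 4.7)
The plan is to exploit the closed form in Lemma \ref{lemma: grdPhi-expression}. Using the identity $(f(\bm y)\otimes \bm I_N)\bm u = \mathrm{vec}(\bm u f(\bm y)^\top)$ for any $\bm u \in \RR^N$, the map $\widehat{\grd}\Phi$, viewed as an $N\times N$ matrix, factorizes as
\[
\widehat{\grd}\Phi(\GSO,\bm y) \;=\; \grd_{\GSO} J(\GSO) \;-\; \lambda\,\bm v(\bm y)\, f(\bm y)^\top,\quad \bm v(\bm y) := (\bm I_N - \GSO\,{\sf Diag}(f'(\bm y)))^{-\top}\bm 1.
\]
Since $\grd_{\GSO} J(\GSO)$ is independent of $\bm y$, the Lipschitz analysis reduces to bounding $\|\bm v(\bm y_1)f(\bm y_1)^\top - \bm v(\bm y_2)f(\bm y_2)^\top\|_F$.

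I would first apply a product-rule decomposition by adding and subtracting $\bm v(\bm y_1) f(\bm y_2)^\top$:
\[
\bm v(\bm y_1)f(\bm y_1)^\top - \bm v(\bm y_2)f(\bm y_2)^\top = \bm v(\bm y_1)\bigl(f(\bm y_1)-f(\bm y_2)\bigr)^\top + \bigl(\bm v(\bm y_1)-\bm v(\bm y_2)\bigr)f(\bm y_2)^\top.
\]
Using $\|\bm u \bm w^\top\|_F = \|\bm u\|_2\|\bm w\|_2$, four ingredients remain: (i) a uniform bound on $\|\bm v(\bm y)\|_2$; (ii) $\|f(\bm y_1)-f(\bm y_2)\|_2\le \|\bm y_1-\bm y_2\|_2$, immediate from the $1$-Lipschitzness of $f$ in H\ref{assu:lipsf}; (iii) $\|f(\bm y_2)\|_2\le \|\bm y_2\|_2\le B$, which follows from $0\le f(x)\le x$ combined with the radius assumption; and (iv) a Lipschitz estimate on $\bm v(\cdot)$.

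For (i), the constraint $\GSO\bm 1 = c\bm 1$ and $|f'|\le 1$ give $\|\GSO\,{\sf Diag}(f'(\bm y))\|_2\le c<1$, where I adopt the same $\|\GSO\|_2\le c$ convention already used in the proof of Lemma \ref{lemma:property-yne}. A Neumann-series argument then yields $\|(\bm I_N - \GSO\,{\sf Diag}(f'(\bm y)))^{-1}\|_2 \leq (1-c)^{-1}$, hence $\|\bm v(\bm y)\|_2 \leq \sqrt{N}/(1-c)$. For (iv), I would invoke the resolvent identity
\[
\bm v(\bm y_1) - \bm v(\bm y_2) = A(\bm y_1)^{-\top}\,{\sf Diag}(f'(\bm y_1) - f'(\bm y_2))\,\GSO^\top \bm v(\bm y_2),
\]
with $A(\bm y) := \bm I_N - \GSO\,{\sf Diag}(f'(\bm y))$; combining $\|A(\bm y_1)^{-\top}\|_2 \leq (1-c)^{-1}$, $\|\GSO^\top\|_2 \leq c$, the bound from (i), and the $1$-Lipschitzness of $f'$ (from $|f''|\le 1$) gives $\|\bm v(\bm y_1)-\bm v(\bm y_2)\|_2 \leq \tfrac{c\sqrt{N}}{(1-c)^2}\|\bm y_1-\bm y_2\|_2$. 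Summing the two contributions produces exactly $L = \tfrac{\sqrt{N}\lambda}{1-c} + \tfrac{\lambda\sqrt{N}Bc}{(1-c)^2}$.

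The main obstacle is step (iv): the implicit dependence of $\bm v(\bm y)$ on $\bm y$ through a matrix inverse must be converted into an explicit perturbation estimate without losing a factor of $(1-c)^{-1}$. The resolvent identity handles this cleanly and avoids expanding the Neumann series term-by-term. The whole argument hinges on the spectral-norm bound $\|\GSO\|_2\le c$ for $\GSO\in\mathcal{S}$, which I would flag as the same convention already in force in Lemma \ref{lemma:property-yne}, so that the overall chain feeding into Theorem \ref{thm:ttgd} remains internally consistent.
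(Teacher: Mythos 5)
Your proposal is correct and follows essentially the same route as the paper: both start from the closed form in Lemma~\ref{lemma: grdPhi-expression}, add and subtract the same cross term (your $\bm v(\bm y_1)f(\bm y_2)^\top$ is exactly the paper's $(f(\bm y_2)\otimes\bm I_N)(\bm I_N-\GSO\,{\sf Diag}(f'(\bm y_1)))^{-\top}\bm 1$ split), handle the difference of inverses via the resolvent identity, and apply the same norm bounds $\|(\bm I_N-\GSO\,{\sf Diag}(f'(\bm y)))^{-1}\|\le (1-c)^{-1}$, $\|f(\bm y_2)\|\le B$, and the $1$-Lipschitzness of $f$ and $f'$. The only difference is notational (reshaping the Kronecker form into a rank-one outer product), and your explicit flagging of the $\|\GSO\|_2\le c$ convention matches the paper's remark preceding Lemma~\ref{lemma:property-G}.
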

\begin{proof}
    From Lemma \ref{lemma: grdPhi-expression}, it remains to consider for any $\bm{y}_1$, $\bm{y}_2$ the following chain:
    \begin{equation} \notag
    \begin{aligned}
        &\|\widehat{\grd}\Phi(\GSO, \bm{y}_1) - \widehat{\grd}\Phi(\GSO, \bm{y}_2)\|_2 \\
        &\leq \lambda \|\left((f(\bm{y}_1) - f(\bm{y}_2)) \otimes \bm{I}_N\right)(\bm{I}_N - \GSO{\sf Diag}(f'(\bm{y}_1)))^{-\top}\bm{1}\|_2 \\
        &+ \lambda \|\left(f(\bm{y}_2) \otimes \bm{I}_N\right)\left((\bm{I}_N - \GSO{\sf Diag}(f'(\bm{y}_1)))^{-\top}\right. \\
        &\left. - (\bm{I}_N - \GSO{\sf Diag}(f'(\bm{y}_2)))^{-\top}\right)\bm{1}\|_2 \\
        & \leq \lambda\sqrt{N} \|\bm{y}_1 - \bm{y}_2\|_2\|(\bm{I}_N - \GSO{\sf Diag}(f'(\bm{y}_1)))^{-\top}\|_2 \\
        &+ \lambda\sqrt{N}\|f(\bm{y}_2)\|\\
        &* \|(\bm{I}_N - \GSO{\sf Diag}(f'(\bm{y}_1)))^{-\top}\|_2\|(\bm{I}_N - \GSO{\sf Diag}(f'(\bm{y}_2)))^{-\top}\|_2 \\
        &* \|({\sf Diag}(f'(\bm{y}_2)) - {\sf Diag}(f'(\bm{y}_1)))\GSO\| \\
        &\leq \frac{\sqrt{N}\lambda}{1-c} \|\bm{y}_1 - \bm{y}_2\|_2 + \frac{\lambda \sqrt{N}Bc}{(1-c)^2} \|\bm{y}_1 - \bm{y}_2\|_2 \\
        & = \left(\frac{\sqrt{N}\lambda}{1-c} + \frac{\lambda \sqrt{N}Bc}{(1-c)^2}\right)\|\bm{y}_1 - \bm{y}_2\|_2.
    \end{aligned}
    \end{equation}
    This yields the required Lipschitz constant.
\end{proof}

This lemma shows that $\widehat{\grd}\Phi(\GSO, \cdot)$ is Lipschitz continuous on any bounded set. The Lipschitz constant is independent of $\GSO$ but relies on the radius $B$. Later we will show that $\{\bm{y}^k\}$ is bounded in TTGD in Proposition \ref{prop:inequality-recursion} and hence this lemma is applicable.
 
\begin{Lemma}\label{lemma: property-l}
(Property of $\ell$) Under assumption H\ref{assu:lipsf}, $\grd{\ell}$ is $L_{\ell}$-Lipschitz continuous, where 
\[ L_\ell = \left(\frac{\sqrt{N}\lambda}{1-c} + \frac{\lambda N\|\bm{b}\|_{\infty}c}{(1-c)^3}\right)\frac{\sqrt{N}\|\bm{b}\|_{\infty}}{(1-c)^2} + 2\beta + \frac{\lambda N\|\bm{b}\|_{\infty}}{(1-c)^3}.
\]
\end{Lemma}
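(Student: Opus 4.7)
The plan is to use the hyper-gradient identity $\grd \ell(\GSO) = \widehat{\grd}\Phi(\GSO, \bm{y}^{\sf NE}(\GSO))$ and to split the difference through the intermediate point $(\GSO_1, \bm{y}_2)$:
\[
\grd \ell(\GSO_1) - \grd \ell(\GSO_2) = \underbrace{\widehat{\grd}\Phi(\GSO_1, \bm{y}_1) - \widehat{\grd}\Phi(\GSO_1, \bm{y}_2)}_{\text{(I)}} + \underbrace{\widehat{\grd}\Phi(\GSO_1, \bm{y}_2) - \widehat{\grd}\Phi(\GSO_2, \bm{y}_2)}_{\text{(II)}},
\]
where $\bm{y}_i := \bm{y}^{\sf NE}(\GSO_i)$. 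The two brackets will then be controlled separately using the previously established Lemmas~\ref{lemma:property-yne} and \ref{lemma: property-hatphi} together with the explicit expression from Lemma~\ref{lemma: grdPhi-expression}.

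For (I), the key observation is that the same chain used inside the proof of Lemma~\ref{lemma:property-yne} already yields the uniform bound $\|\bm{y}^{\sf NE}(\GSO)\|_\infty \leq \|\bm{b}\|_\infty/(1-c)$, hence $\|\bm{y}^{\sf NE}(\GSO)\| \leq \sqrt{N}\|\bm{b}\|_\infty/(1-c) =: B$. Substituting this $B$ into Lemma~\ref{lemma: property-hatphi} shows that $\widehat{\grd}\Phi(\GSO_1, \cdot)$ is $L(B)$-Lipschitz on the ball of radius $B$, with $L(B) = \frac{\sqrt{N}\lambda}{1-c} + \frac{\lambda N\|\bm{b}\|_\infty c}{(1-c)^3}$. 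Chaining this with the $L_y$-Lipschitz bound on $\bm{y}^{\sf NE}(\cdot)$ from Lemma~\ref{lemma:property-yne} gives $\|\text{(I)}\| \leq L(B)\, L_y\,\|\GSO_1 - \GSO_2\|$, which is exactly the first summand of the stated $L_\ell$.

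For (II), I would read off the closed form from Lemma~\ref{lemma: grdPhi-expression}. The $\grd_\GSO J(\GSO) = \bm{D} + 2\beta\GSO$ part is immediately $2\beta$-Lipschitz in $\GSO$. For the implicit part, set $A_i := \bm{I}_N - \GSO_i\,{\sf Diag}(f'(\bm{y}_2))$ and apply the resolvent identity $A_1^{-\top} - A_2^{-\top} = A_1^{-\top}(A_2 - A_1)^\top A_2^{-\top}$. Each $A_i^{-\top}$ has operator norm at most $1/(1-c)$ since $\|\GSO_i\,{\sf Diag}(f'(\bm{y}_2))\| \leq \rho(\GSO_i)\leq c$ (using $\GSO_i\in\mathcal{S}$ together with $|f'|\leq 1$ from H\ref{assu:lipsf}), while $(A_2 - A_1)^\top$ depends linearly on $\GSO_1 - \GSO_2$. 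Combining this with the outer factor $\|(f(\bm{y}_2)\otimes \bm{I}_N)\bm{1}\| \leq \sqrt{N}\|f(\bm{y}_2)\|_\infty\cdot\sqrt{N} \leq N\|\bm{b}\|_\infty/(1-c)$ yields the $\frac{\lambda N\|\bm{b}\|_\infty}{(1-c)^3}$ contribution; adding $2\beta$ recovers the second summand of $L_\ell$.

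The main obstacle is purely bookkeeping: keeping track of the $\sqrt{N}$ versus $N$ factors coming from the Kronecker product $f(\bm{y})\otimes \bm{I}_N$ and the all-ones vector $\bm{1}$, and making sure that the two resolvent factors in the expansion of $A_1^{-\top} - A_2^{-\top}$ are estimated by the tight bound $1/(1-c)$ rather than a looser one, so that the three pieces add up to exactly the stated $L_\ell$ instead of a cruder overestimate. No new analytical ingredient beyond the three preceding lemmas is required.
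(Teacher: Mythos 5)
Your proposal is correct and follows essentially the same route as the paper's proof: split $\grd\ell(\GSO_1)-\grd\ell(\GSO_2)$ through an intermediate point, bound the $\bm{y}$-variation by chaining Lemma~\ref{lemma: property-hatphi} (with $B=\sqrt{N}\|\bm{b}\|_\infty/(1-c)$) with the $L_y$-Lipschitzness of $\bm{y}^{\sf NE}(\cdot)$ from Lemma~\ref{lemma:property-yne}, and bound the $\GSO$-variation via the $2\beta$-Lipschitz data term plus a resolvent-identity estimate on the implicit part. The only cosmetic difference is your choice of intermediate point $(\GSO_1,\bm{y}_2)$ versus the paper's $(\GSO_2,\bar{\bm{y}}_1)$, which is immaterial since all the Lipschitz constants involved are uniform over $\mathcal{S}$ and over Nash equilibria.
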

\begin{proof}
    Recall that $\grd \ell( \GSO ) = \widehat{\grd} \Phi( \GSO , \bar{\bm y} )$. Hence, for all $\GSO_1, \GSO_2 \in \mathcal{S}$, 
    \begin{equation*}
        \begin{aligned}
            &\|\grd \ell( \GSO_1 ) - \grd \ell( \GSO_2 )\| = \|\widehat{\grd} \Phi( \GSO_1 , \bar{\bm y}_1 ) - \widehat{\grd} \Phi( \GSO_2 , \bar{\bm y}_2 )\| \\ 
            \leq &\|\widehat{\grd} \Phi( \GSO_1 , \bar{\bm y}_1 ) - \widehat{\grd} \Phi( \GSO_2 , \bar{\bm y}_1 )\| + \|\widehat{\grd} \Phi( \GSO_2 , \bar{\bm y}_1 ) - \widehat{\grd} \Phi( \GSO_2 , \bar{\bm y}_2 )\|
        \end{aligned}
    \end{equation*}
    The proof of Lemma \ref{lemma:property-yne} shows that for any NE $\bar{\bm{y}}$, it is bounded by $\frac{\sqrt{N}\|\bm{b}\|_{\infty}}{1-c}$. Hence, we can choose $B = \frac{\sqrt{N}\|\bm{b}\|_{\infty}}{1-c}$ in Lemma \ref{lemma: property-hatphi}, which together with the result from Lemma \ref{lemma:property-yne} give the upper bound on the second term:  \[\|\widehat{\grd} \Phi( \GSO_2 , \bar{\bm y}_1 ) - \widehat{\grd} \Phi( \GSO_2 , \bar{\bm y}_2 )\| \leq L_1\|\GSO_1 - \GSO_2\|,\]
    where $L_1 = \left(\frac{\sqrt{N}\lambda}{1-c} + \frac{\lambda N\|\bm{b}\|_{\infty}c}{(1-c)^3}\right)\frac{\sqrt{N}\|\bm{b}\|_{\infty}}{(1-c)^2}$. So it remains to bound $\|\widehat{\grd} \Phi( \GSO_1 , \bar{\bm y}_1 ) - \widehat{\grd} \Phi( \GSO_2 , \bar{\bm y}_1 )\|$ with $\|\GSO_1 - \GSO_2\|$.
    
    Since $\grd \Phi( \GSO , \bar{\bm y} ) = \grd J(\GSO)$ and $\grd J(\GSO)$ is $2\beta$-Lipschitz, our remaining task is to show that the map 
    \[
    {P(\GSO)} := - ( {\rm J}_{\GSO} {\sf T}( \bar{\bm y}_1; \GSO) )^\top ( {\rm J}_{\bm y} {\sf T}( \bar{\bm y}_1; \GSO) - {\bm I}_N )^{-\top} \grd_{\bm y} \Phi(\GSO; \bar{\bm y}_1 )
    \] 
    is Lipschitz continuous over ${\cal S}$.
    Note that for any $\GSO_1, \GSO_2 \in \mathcal{S}$, denote $\widehat{\GSO}_1 = {\sf Diag}(f'(\bar{\bm{y}}_1))\GSO_1$ and $\widehat{\GSO}_2 = {\sf Diag}(f'(\bar{\bm{y}}_1))\GSO_2$, 
    \begin{equation*}
        \begin{aligned}
            &\|P(\GSO_1) - P(\GSO_2)\| \\
            = &\lambda\|(f(\bar{\bm{y}}_1) \otimes \bm{I}_N)[(\bm{I}_N - \widehat{\GSO}_1)^{-1} - (\bm{I}_N - \widehat{\GSO}_2)^{-1}]\bm{1}\| \\
            \leq &\lambda \sqrt{N}\|\bar{\bm{y}}_1\|\|(\bm{I}_N - \widehat{\GSO}_1)^{-1} - (\bm{I}_N - \widehat{\GSO}_2)^{-1}\| \\
            = &\lambda \sqrt{N}\|\bar{\bm{y}}_1\|\|(\bm{I}_N - \widehat{\GSO}_1)^{-1}[\widehat{\GSO}_1 - \widehat{\GSO}_2](\bm{I}_N - \widehat{\GSO}_2)^{-1}\| \\
            \leq &\frac{\lambda \sqrt{N}\|\bar{\bm{y}}_1\|}{(1-c)^2}\|\widehat{\GSO}_1 - \widehat{\GSO}_2\| \\
            \leq &\frac{\lambda \sqrt{N}\|\bar{\bm{y}}_1\|}{(1-c)^2}\|\GSO_1 - \GSO_2\|.
        \end{aligned}
    \end{equation*}
    The proof of Lemma \ref{lemma:property-yne} implies that if $\bm{y}$ is a NE, it is bounded by $\frac{\sqrt{N}}{1-c}\|\bm{b}\|_{\infty}$. Hence, 
    \[\|P(\GSO_1) - P(\GSO_2)\| \leq \frac{\lambda N\|\bm{b}\|_{\infty}}{(1-c)^3}\|\GSO_1 - \GSO_2\|.\]
    The proof is completed.
\end{proof}
\noindent Notice that the above lemma implies 
\begin{enumerate}
        \item $\ell$ is $L_{\ell}$-smooth, e.g. $\forall \GSO_1, \GSO_2 \in \mathcal{S}$
        \[l(\GSO_1)-l(\GSO_2)\leq \langle\nabla l(\GSO_2),\GSO_1-\GSO_2\rangle +(L_{\ell}/2)\|\GSO_1-\GSO_2\|^{2}.\]
        \item   $\ell$ is  $L_{\ell}$- weakly convex, e.g. $\forall \GSO_1, \GSO_2 \in \mathcal{S}$ 
        \[l(\GSO_1)-l(\GSO_2)\geq \langle\nabla l(\GSO_2),\GSO_1-\GSO_2\rangle -(L_{\ell}/2)\|\GSO_1-\GSO_2\|^{2};\]
\end{enumerate}

We also need the follow propositions to assist the proof of Theorem~\ref{thm:ttgd}.
 The following proposition is modified from Lemma 3.6 in \cite{hong2023two}.
 \begin{Prop}\label{prop:inequality-general}
         Consider nonnegative sequences $\{\Omega_k\}, \{\Gamma_k\}, \{\Theta_k\}$. Let $c_0, c_1, d_0, d_1 > 0$, if 
    \beq
    \begin{aligned}
        &\Omega_{k+1} \leq \Omega_{k} - c_0\Theta_{k+1} + c_1\Gamma_{k+1}, \\
        &\Gamma_{k+1} \leq (1-d_0)\Gamma_k + d_1\Theta_k,
    \end{aligned}
    \eeq
    When $\frac{c_0}{c_1} >\frac{d_1}{d_0}$, we have
    \beq
    \begin{aligned}
        &\frac{1}{K}\sum_{k = 1}^K\Theta_k \leq \frac{\Omega_0 + \frac{c_1}{d_0}(\Gamma^0 + d_1\Theta_0)}{(c_0 - c_1d_1/d_0)K}, \\
        &\frac{1}{K}\sum_{k = 1}^K\Gamma_k \leq \frac{\Gamma_0 + d_1\Theta_0 + \frac{d_1}{c_0}\Omega_0}{(d_0 - c_1d_1/c_0)K},
    \end{aligned}
    \eeq
 \end{Prop}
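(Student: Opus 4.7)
}
The plan is to treat this as a coupled linear recursion and extract the two bounds by telescoping each inequality separately and then solving the resulting two-variable linear system in the partial sums $S_\Theta^K := \sum_{k=1}^K \Theta_k$ and $S_\Gamma^K := \sum_{k=1}^K \Gamma_k$. The condition $c_0/c_1 > d_1/d_0$ will appear exactly as the non-degeneracy requirement of this $2\times 2$ system.

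First, I would telescope the $\Omega$-recursion from $k=0$ to $K-1$: since $\Omega_K \geq 0$, summing yields
\[
c_0\, S_\Theta^K \leq \Omega_0 + c_1\, S_\Gamma^K. \tag{A}
\]
Next, I would handle the $\Gamma$-recursion similarly. Summing $\Gamma_{k+1} \leq (1-d_0)\Gamma_k + d_1 \Theta_k$ over $k=0,\ldots,K-1$ and rewriting $\sum_{k=0}^{K-1} \Gamma_k = S_\Gamma^K + \Gamma_0 - \Gamma_K$, together with $\Gamma_K \geq 0$ and $\sum_{k=0}^{K-1}\Theta_k \leq \Theta_0 + S_\Theta^K$, gives
\[
d_0\, S_\Gamma^K \leq \Gamma_0 + d_1 \Theta_0 + d_1\, S_\Theta^K. \tag{B}
\]

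With (A) and (B) in hand, the proof becomes algebra. For the first claim, I would substitute (B) into (A): dividing (B) by $d_0$ and plugging into (A) produces
\[
c_0 S_\Theta^K \leq \Omega_0 + \tfrac{c_1}{d_0}\!\left(\Gamma_0 + d_1 \Theta_0\right) + \tfrac{c_1 d_1}{d_0}\, S_\Theta^K.
\]
Under the assumption $c_0/c_1 > d_1/d_0$, the coefficient $c_0 - c_1 d_1/d_0$ is strictly positive, so I can move the $S_\Theta^K$ term to the left and divide by $K$ to obtain the first stated bound. For the second claim I would perform the symmetric substitution: plug (A) into (B), observe that the resulting coefficient of $S_\Gamma^K$ on the right is $c_1 d_1/c_0$, and again invoke the hypothesis (equivalently $d_0 > c_1 d_1/c_0$) to isolate $S_\Gamma^K$ and divide by $K$.

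I don't expect any genuine obstacle; the only care needed is bookkeeping of the endpoint terms $\Theta_0, \Gamma_0$ when shifting summation indices (since (A) involves $\sum_{k=1}^K$ while the $\Gamma$-recursion naturally produces $\sum_{k=0}^{K-1} \Theta_k$), and ensuring the loose bound $(1-d_0)\Gamma_0 \leq \Gamma_0$ is applied so that the final constants match the form stated in the proposition. Nonnegativity of the sequences is used twice: to drop $\Omega_K$ and $\Gamma_K$ after telescoping. The condition $c_0/c_1 > d_1/d_0$ is used exactly to ensure the $2\times 2$ coupling between $S_\Theta^K$ and $S_\Gamma^K$ is a contraction, so that each can be bounded independently of the other.
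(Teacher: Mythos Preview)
Your proposal is correct and recovers the stated constants exactly. The paper itself does not supply a proof of this proposition; it merely cites it as a modification of Lemma~3.6 in \cite{hong2023two} and states the result. Your telescoping of each recursion followed by cross-substitution of the two linear inequalities in $S_\Theta^K$ and $S_\Gamma^K$ is the standard argument for such coupled bounds, and the condition $c_0/c_1 > d_1/d_0$ is used precisely as you say, to make both substituted coefficients positive. The only implicit assumption in your derivation of (B) is $d_0 \leq 1$ (needed so that $(1-d_0)\Gamma_K \geq 0$ can be dropped with the correct sign); this holds in the paper's application where $d_0 = \alpha\mu_g/2 < 1$, so it is harmless here.
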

\begin{Prop}\label{prop:inequality-recursion}
    If the lower level stepsize $\alpha$ satisfies $\alpha\leq\frac{1-c}{(1+c)^{2}}$, we have, 
    \begin{enumerate}
        \item     \begin{equation}\label{recursion-y}
        \begin{aligned}
            \|\bm{y}^{k+1}-\bar{\bm{y}}^k\|^{2}\leq &\left(1 - \frac{\alpha(1-c)}{2}\right)\|\bm{y}^k - \bar{\bm{y}}^{k-1}\|^2 \\
            &+ L_y^2\left(\frac{2}{(1-c)\alpha} - 1\right)\|\GSO^{k-1} - \GSO^k\|^2
        \end{aligned}
    \end{equation}
    \item     \begin{equation}
        \begin{aligned}
            l(\GSO^{k+1})-l(\GSO^{k}) \leq& -(\frac{1}{2\gamma}-\frac{L_{\ell}}{2})\|\GSO^{k+1}-\GSO^{k}\|^{2} \\
            &+ \frac{\gamma}{2}L^{2}\|\bm{y}^{k+1}-\bar{\bm{y}}^k\|^{2}
        \end{aligned}
    \end{equation}
    \end{enumerate}
\end{Prop}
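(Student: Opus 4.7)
The plan is to prove the two inequalities separately, combining the contraction/coercivity of ${\sf G}_{\GSO}$ from Lemma~\ref{lemma:property-G}, the Lipschitzness of $\bm{y}^{\sf NE}(\cdot)$ from Lemma~\ref{lemma:property-yne}, and the smoothness estimates in Lemmas~\ref{lemma: property-hatphi}--\ref{lemma: property-l}, with the projection optimality of \eqref{eq:ttgd_W} and Young's inequality serving as the glue.

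For Part~1, I would first use that ${\sf G}_{\GSO^k}(\bar{\bm y}^k)=\bm{0}$ since $\bar{\bm y}^k$ is a fixed point of ${\sf T}(\cdot;\GSO^k)$. The update \eqref{eq:ttgd_y} then reads $\bm{y}^{k+1}-\bar{\bm y}^k = (\bm{y}^k-\bar{\bm y}^k) - \alpha({\sf G}_{\GSO^k}(\bm{y}^k)-{\sf G}_{\GSO^k}(\bar{\bm y}^k))$. Squaring and invoking both parts of Lemma~\ref{lemma:property-G} produces
\begin{equation*}
\|\bm{y}^{k+1}-\bar{\bm y}^k\|^2 \leq \bigl(1-2\alpha(1-c)+\alpha^2(1+c)^2\bigr)\|\bm{y}^k-\bar{\bm y}^k\|^2.
\end{equation*}
Under $\alpha\leq (1-c)/(1+c)^2$, the bracket is bounded by $1-\alpha(1-c)$. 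I would then split $\bm{y}^k-\bar{\bm y}^k=(\bm{y}^k-\bar{\bm y}^{k-1})+(\bar{\bm y}^{k-1}-\bar{\bm y}^k)$ and apply Young's inequality with the tuned weight $\epsilon=\tfrac{\alpha(1-c)/2}{1-\alpha(1-c)}$, chosen precisely so that $(1-\alpha(1-c))(1+\epsilon)=1-\alpha(1-c)/2$. Bounding $\|\bar{\bm y}^{k-1}-\bar{\bm y}^k\|\leq L_y\|\GSO^{k-1}-\GSO^k\|$ via Lemma~\ref{lemma:property-yne} and applying the trivial slack $1-\alpha(1-c)\leq 1$ to the cross-coefficient $(1-\alpha(1-c))(1+1/\epsilon)$ yields the claimed constant $L_y^2(\tfrac{2}{\alpha(1-c)}-1)$.

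For Part~2, I would start from the $L_\ell$-smoothness of $\ell$ guaranteed by Lemma~\ref{lemma: property-l}:
\begin{equation*}
\ell(\GSO^{k+1})-\ell(\GSO^k) \leq \langle \grd\ell(\GSO^k),\GSO^{k+1}-\GSO^k\rangle + \tfrac{L_\ell}{2}\|\GSO^{k+1}-\GSO^k\|^2.
\end{equation*}
Since $\GSO^{k+1}$ is the Euclidean projection onto ${\cal S}$ of $\GSO^k-\gamma\widehat{\grd}\Phi(\GSO^k,\bm{y}^{k+1})$, testing the projection inequality at the feasible point $\GSO^k$ gives $\langle \widehat{\grd}\Phi(\GSO^k,\bm{y}^{k+1}),\GSO^{k+1}-\GSO^k\rangle \leq -\tfrac{1}{\gamma}\|\GSO^{k+1}-\GSO^k\|^2$. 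Inserting $\grd\ell(\GSO^k)=\widehat{\grd}\Phi(\GSO^k,\bar{\bm y}^k)$ and writing the inner product as this dominant term plus the error $\langle \grd\ell(\GSO^k)-\widehat{\grd}\Phi(\GSO^k,\bm{y}^{k+1}),\GSO^{k+1}-\GSO^k\rangle$, then using the $L$-Lipschitzness of $\widehat{\grd}\Phi(\GSO^k,\cdot)$ from Lemma~\ref{lemma: property-hatphi} together with Cauchy--Schwarz and the Young step $ab\leq \tfrac{\gamma L^2}{2}a^2+\tfrac{1}{2\gamma}b^2$ with $a=\|\bm{y}^{k+1}-\bar{\bm y}^k\|$, $b=\|\GSO^{k+1}-\GSO^k\|$, collects everything into the stated descent bound.

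The main obstacle I anticipate is ensuring that Lemma~\ref{lemma: property-hatphi} is applicable along the trajectory, since its Lipschitz constant $L$ scales with a uniform bound $B$ on $\|\bm{y}^{k+1}\|$. This forces an inductive boundedness argument fed by Part~1: since the recursion from Part~1 contracts $\|\bm{y}^{k}-\bar{\bm y}^{k-1}\|^2$ against a quantity proportional to $\|\GSO^{k-1}-\GSO^k\|^2$, and the NE bound $\|\bar{\bm y}^k\|\leq \tfrac{\sqrt N\|\bm{b}\|_\infty}{1-c}$ is already available from the proof of Lemma~\ref{lemma:property-yne}, the triangle inequality furnishes a radius consistent with the $L$ appearing in the statement of Theorem~\ref{thm:ttgd}. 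The remaining subtleties are purely bookkeeping, centered on the precise Young weight in Part~1 and tracking constants through the smoothness step in Part~2.
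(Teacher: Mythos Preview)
Your proposal is correct and mirrors the paper's proof essentially step for step: the same contraction estimate via Lemma~\ref{lemma:property-G} followed by the same Young splitting with weight $z=\tfrac{\alpha(1-c)/2}{1-\alpha(1-c)}$ for Part~1, and the same combination of $L_\ell$-smoothness, projection optimality, and the Lipschitz bound from Lemma~\ref{lemma: property-hatphi} for Part~2. You also correctly anticipate the boundedness issue for $\{\bm{y}^k\}$, which the paper handles exactly as you suggest---unrolling the Part~1 recursion with the a~priori bound $\|\GSO^{k}-\GSO^{k-1}\|\leq 2c$ coming from $\GSO^k\in{\cal S}$.
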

\begin{proof}
Since $\bm{y}^{k+1} = \bm{y}^{k} - \alpha G_{\GSO^k}(\bm{y}^k)$, we have
\begin{equation*}
    \begin{aligned}
        \|\bm{y}^{k+1} - \bar{\bm{y}}^k\|^2 = &\|\bm{y}^k - \bar{\bm{y}}^k - \alpha G_{\GSO^k}(\bm{y}^k)\|^2 \\
        = &\|\bm{y}^k - \bar{\bm{y}}^k - \alpha \left(G_{\GSO^k}(\bm{y}^k) - G_{\GSO^k}(\bar{\bm{y}}^k\right)\|^2 \\
        = &\|\bm{y}^k - \bar{\bm{y}}^k\|^2 + \alpha^2\|G_{\GSO^k}(\bm{y}^k) - G_{\GSO^k}(\bar{\bm{y}}^k)\|^2 \\
        &- 2\alpha\langle \bm{y}^k - \bar{\bm{y}}^k, G_{\GSO^k}(\bm{y}^k) - G_{\GSO^k}(\bar{\bm{y}}^k\rangle \\
         \leq &(1 + \alpha^2L_g^2)\|\bm{y}^k - \bar{\bm{y}}^k\|^2 \\
         &- 2\alpha\langle \bm{y}^k - \bar{\bm{y}}^k, G_{\GSO^k}(\bm{y}^k) - G_{\GSO^k}(\bar{\bm{y}}^k)\rangle \\
         \leq &(1 + \alpha^2L_g^2 - 2\alpha\mu_g)\|\bm{y}^k - \bar{\bm{y}}^k\|^2.
    \end{aligned}
\end{equation*}
Here, we define $L_g := 1+c$ and $\mu_g := 1-c$. The first inequality is from the Lipschitz continuity of $G_{\GSO^k}(\cdot)$ (Lemma \ref{lemma:property-G}) and the second inequality is from Lemma \ref{lemma:property-G}.

When $\alpha \leq \frac{\mu_g}{L_g^2}$, we can get
\begin{equation*}
    \begin{aligned}
       &(1 + \alpha^2L_g^2 - 2\alpha\mu_g)\|\bm{y}^k - \bar{\bm{y}}^k\|^2 \leq (1 - \alpha\mu_g)\|\bm{y}^k - \bar{\bm{y}}^k\|^2 \\
        = &(1 - \alpha\mu_g)\|\bm{y}^k - \bar{\bm{y}}^{k-1} + \left(\bar{\bm{y}}^{k-1} - \bar{\bm{y}}^k\right)\|^2 \\
       &\leq (1 - \alpha\mu_g)\left[ (1+z)\|\bm{y}^k - \bar{\bm{y}}^{k-1}\|^2 + (1+\frac{1}{z})\|\bar{\bm{y}}^{k-1} - \bar{\bm{y}}^k\|^2\right]
    \end{aligned}
\end{equation*}
From the Lipschitz continuity of ${\bm y}^{\sf NE} ( \GSO^k )$ (Lemma \ref{lemma:property-yne}), we have 
\begin{equation*}
    \begin{aligned}
        &(1 - \alpha\mu_g)\left[ (1+z)\|\bm{y}^k - \bar{\bm{y}}^{k-1}\|^2 + (1+\frac{1}{z})\|\bar{\bm{y}}^{k-1} - \bar{\bm{y}}^k\|^2\right] \\
        \leq &(1 - \alpha\mu_g) (1+z)\|\bm{y}^k - \bar{\bm{y}}^{k-1}\|^2 \\
        &+ L_y^2(1 - \alpha\mu_g)(1+\frac{1}{z})\|\GSO^{k-1} - \GSO^k\|^2 \\
        \leq &(1 - \alpha\mu_g) (1+z)\|\bm{y}^k - \bar{\bm{y}}^{k-1}\|^2 + L_y^2(1+\frac{1}{z})\|\GSO^{k-1} - \GSO^k\|^2.
    \end{aligned}
\end{equation*}
The last inequality is from the fact that if $\alpha \leq \frac{\mu_g}{L_g^2}$, $\alpha\mu_g < 1$. Set $z = \frac{\mu_g\alpha}{2(1-\mu_g\alpha)}$. We have
\begin{equation*}
    \begin{aligned}
        &(1 - \alpha\mu_g) (1+z)\|\bm{y}^k - \bar{\bm{y}}^{k-1}\|^2 + L_y^2(1+\frac{1}{z})\|\GSO^{k-1} - \GSO^k\|^2 \\
        = &(1 - \alpha\mu_g/2)\|\bm{y}^k - \bar{\bm{y}}^{k-1}\|^2 + L_y^2(\frac{2}{\mu_g\alpha} - 1)\|\GSO^{k-1} - \GSO^k\|^2.
    \end{aligned}
\end{equation*}
This gives the first part result. A direct corollary of this result is that $\bm{y}^k$ is bounded. To see this, note that since $\|\GSO^k - \GSO^{k-1}\| \leq 2c$, \eqref{recursion-y} implies
\begin{equation*}
    \begin{aligned}
        &\|\bm{y}^{k+1} - \bar{\bm{y}}^k\|^2 - \frac{4cL_y^2}{\alpha(1-c)}(\frac{2}{(1-c)\alpha}-1) \\
        \leq &\left(1-\frac{\alpha(1-c)}{2}\right)\left(\|\bm{y}^{k} - \bar{\bm{y}}^{k-1}\|^2 - \frac{4cL_y^2}{\alpha(1-c)}(\frac{2}{(1-c)\alpha}-1)\right)
    \end{aligned}
\end{equation*}
Hence, 
\begin{equation*}
    \begin{aligned}
       |\bm{y}^{k+1} - \bar{\bm{y}}^k\|^2 \leq &\left(1-\frac{\alpha(1-c)}{2}\right)^k \left(\|\bm{y}^{1} - \bar{\bm{y}}^{0}\|^2 - \frac{4cL_y^2}{\alpha(1-c)}(\frac{2}{(1-c)\alpha}-1)\right) \\
       &+ \frac{4cL_y^2}{\alpha(1-c)}(\frac{2}{(1-c)\alpha}-1) \\
       \leq &\left(1-\frac{\alpha(1-c)}{2}\right)\|\bm{y}^{1} - \bar{\bm{y}}^{0}\|^2 \\
       &+ \frac{\alpha(1-c)}{2}*\frac{4cL_y^2}{\alpha(1-c)}(\frac{2}{(1-c)\alpha}-1) \\
       = &\left(1-\frac{\alpha(1-c)}{2}\right)\|\bm{y}^{1} - \bar{\bm{y}}^{0}\|^2 + 2cL_y^2(\frac{2}{(1-c)\alpha}-1)
    \end{aligned}
\end{equation*}
This gives the upper bound on $\{\bm{y}^k\}$:
\begin{equation*}
    \begin{aligned}
        \|\bm{y}^k\| \leq &\frac{\sqrt{N}\|\bm{b}\|_{\infty}}{1-c} + \sqrt{\left(1-\frac{\alpha(1-c)}{2}\right)\|\bm{y}^{1} - \bar{\bm{y}}^{0}\|^2 + 2cL_y^2(\frac{2}{(1-c)\alpha}-1)} \\
        &=: B.
    \end{aligned}
\end{equation*}
Together with Lemma \ref{lemma: property-hatphi}, we get to know that the $L$-Lipschitz continuity of $\widehat{\grd}\Phi(\GSO,\cdot)$ is applicable on the set $\{\bm{y}^k\}$ with $L = \frac{\sqrt{N}\lambda}{1-c} + \frac{\lambda \sqrt{N}Bc}{(1-c)^2}$. This will be used in the proof of the second part of the lemma.

For the second part, from the $L_{\ell}$-smoothness of $l(\cdot)$ (Lemma \ref{lemma: property-l}), we have
\begin{equation}\label{inequ_ingredient_lip_gradient_l}
\begin{aligned}
        l(\GSO^{k+1})-l(\GSO^{k})\leq &\langle\nabla l(\GSO^{k}),\GSO^{k+1}-\GSO^{k}\rangle \\
    &+(L_{\ell}/2)\|\GSO^{k+1}-\GSO^{k}\|^{2}.
\end{aligned}
\end{equation}
Note that
\begin{equation*}
\begin{aligned}
       & \langle\nabla l(\GSO^{k}),\GSO^{k+1}-\GSO^{k}\rangle \\
       = &\langle\nabla l(\GSO^{k})-\widehat{\grd}\Phi\left(\GSO^{k},\bm{y}^{k+1}\right), \GSO^{k+1}-\GSO^{k}\rangle \\
       &+\langle\widehat{\grd}\Phi(\GSO^k,\bm{y}^{k+1}),\GSO^{k+1}-\GSO^{k}\rangle \\
       \leq &\langle\nabla l(\GSO^{k})-\widehat{\grd}\Phi\left(\GSO^{k},\bm{y}^{k+1}\right), \GSO^{k+1}-\GSO^{k}\rangle \\
       &- \frac{1}{\gamma}\|\GSO^{k+1} - \GSO^k\|^2 \\
        \leq &\frac{\gamma}{2}\|\nabla l(\GSO^{k})-\widehat{\grd}\Phi\left(\GSO^{k},\bm{y}^{k+1}\right)\|^2 - \frac{1}{2\gamma}\|\GSO^{k+1} - \GSO^k\|^2 \\
        \leq &\frac{\gamma}{2}L^2\|\bar{\bm{y}}^k - \bm{y}^{k+1}\|^2 - \frac{1}{2\gamma}\|\GSO^{k+1} - \GSO^k\|^2.
\end{aligned}
\end{equation*}
Here, the first inequality is from the update of $\GSO$ and the last inequality is from the Lipschitz continuity of $\widehat{\grd}\Phi\left(\GSO^{k},\cdot\right)$ (Lemma \ref{lemma: property-hatphi}).

Combining the above result with \eqref{inequ_ingredient_lip_gradient_l} gives
\begin{equation*}
    \begin{aligned}
        l(\GSO^{k+1})-l(\GSO^{k}) \leq &-(\frac{1}{2\gamma}-\frac{L_{\ell}}{2})\|\GSO^{k+1}-\GSO^{k}\|^{2} \\
        &+ \frac{\gamma}{2}L^{2}\|\bm{y}^{k+1}-\bar{\bm{y}}^k\|^{2}.
    \end{aligned}
\end{equation*}
\end{proof}

\begin{Prop}\label{prop:inequa-deltax-deltay}
    If we set $\alpha = \frac{\mu_g}{L_g^2} = \frac{1-c}{(1+c)^2}$ and $\gamma \leq \min\{\frac{3}{4L_{\ell}}, \frac{\mu_g}{4LL_y}\alpha\}$, we have
\begin{equation}\label{upperbond_deltax}
    \frac{1}{K}\sum_{k = 1}^K\|\GSO^k - \GSO^{k-1}\|^2 \leq \mathcal{O}(\frac{1}{K})
\end{equation}
\begin{equation}\label{upperbond_deltay}
    \frac{1}{K}\sum_{k = 1}^K\|\bm{y}^k - \bar{\bm{y}}^{k-1}\|^2 \leq \mathcal{O}(\frac{1}{K})
\end{equation}
\end{Prop}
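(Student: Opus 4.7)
The plan is to match the two recursive inequalities established in Proposition~\ref{prop:inequality-recursion} to the abstract descent/contraction system in Proposition~\ref{prop:inequality-general}, and then read off the $\mathcal{O}(1/K)$ rates directly. Concretely, I would set
\[
\Omega_k = \ell(\GSO^k) - \ell^{\star}, \qquad \Theta_{k+1} = \|\GSO^{k+1}-\GSO^k\|^2, \qquad \Gamma_{k+1} = \|\bm{y}^{k+1}-\bar{\bm{y}}^k\|^2,
\]
where $\ell^{\star} := \inf_{\GSO\in\mathcal{S}} \ell(\GSO) > -\infty$ (finiteness follows from continuity of $\ell$ and compactness of $\mathcal{S}$; in particular $\Omega_k \geq 0$ and $\Omega_0$ is finite).

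With this identification, part~(2) of Proposition~\ref{prop:inequality-recursion} directly yields the descent inequality
$\Omega_{k+1}\leq \Omega_k - c_0\,\Theta_{k+1} + c_1\,\Gamma_{k+1}$ with
$c_0 = \frac{1}{2\gamma}-\frac{L_\ell}{2}$ and $c_1 = \frac{\gamma L^2}{2}$,
while part~(1) gives the contraction
$\Gamma_{k+1}\leq (1-d_0)\Gamma_k + d_1\,\Theta_k$ with
$d_0 = \tfrac{\alpha(1-c)}{2}=\tfrac{\alpha\mu_g}{2}$ and $d_1 = L_y^2\bigl(\tfrac{2}{\mu_g\alpha}-1\bigr)$. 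The stipulation $\gamma\leq \tfrac{3}{4L_\ell}$ is used here to guarantee $c_0 \geq \tfrac{1}{8\gamma}>0$, a prerequisite for the descent reading.

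The main obstacle is verifying the ratio condition $c_0/c_1 > d_1/d_0$ required by Proposition~\ref{prop:inequality-general}. I would bound
\[
\frac{c_0}{c_1} \;=\; \frac{1-\gamma L_\ell}{\gamma^2 L^2} \;\geq\; \frac{1}{4\gamma^2 L^2}, \qquad \frac{d_1}{d_0} \;=\; \frac{2L_y^2(2-\mu_g\alpha)}{\mu_g^2\alpha^2} \;\leq\; \frac{4L_y^2}{\mu_g^2\alpha^2},
\]
using $\gamma L_\ell\leq 3/4$ for the first estimate and $\mu_g\alpha\in[0,1]$ (which is implied by the choice $\alpha=\mu_g/L_g^2$ together with $\mu_g\leq L_g$) for the second. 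Comparing the two, the condition is implied by $\gamma^2 \leq \tfrac{\mu_g^2\alpha^2}{16\,L^2L_y^2}$, i.e.~$\gamma\leq \tfrac{\mu_g\alpha}{4LL_y}$, which is exactly the second restriction in the hypothesis. I would also check that the constant $L$ appearing in Proposition~\ref{prop:inequality-recursion} (taken from Lemma~\ref{lemma: property-hatphi} with $B$ derived inside the proof of that proposition) is uniformly valid along the whole trajectory, which was already established when showing that $\{\bm{y}^k\}$ is bounded.

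Once the ratio condition is in force, Proposition~\ref{prop:inequality-general} yields
\[
\tfrac{1}{K}\sum_{k=1}^{K}\Theta_k \;\leq\; \frac{\Omega_0+\tfrac{c_1}{d_0}(\Gamma_0+d_1\Theta_0)}{(c_0-c_1d_1/d_0)\,K}, \qquad \tfrac{1}{K}\sum_{k=1}^{K}\Gamma_k \;\leq\; \frac{\Gamma_0+d_1\Theta_0+\tfrac{d_1}{c_0}\Omega_0}{(d_0-c_1d_1/c_0)\,K},
\]
which are exactly \eqref{upperbond_deltax} and \eqref{upperbond_deltay}. To close the argument I would note that $\Omega_0\leq \ell(\GSO^0)-\ell^{\star}<\infty$ by compactness of $\mathcal{S}$, and $\Theta_0,\Gamma_0$ are finite because $\GSO^0\in\mathcal{S}$ and $\{\bm{y}^k\}$ is bounded; hence all numerators are $\mathcal{O}(1)$ and the $\mathcal{O}(1/K)$ rate follows. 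The whole proof is therefore essentially a bookkeeping exercise on the constants, with the only nontrivial point being the algebraic verification that the prescribed stepsize bounds enforce $c_0/c_1>d_1/d_0$.
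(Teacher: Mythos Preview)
Your proposal is correct and follows essentially the same route as the paper: identify the sequences $\Omega_k,\Theta_k,\Gamma_k$ and constants $c_0,c_1,d_0,d_1$ from the two inequalities of Proposition~\ref{prop:inequality-recursion}, verify the ratio condition $c_0/c_1>d_1/d_0$ via the chain $\tfrac{1-\gamma L_\ell}{\gamma^2L^2}\geq\tfrac{1}{4\gamma^2L^2}\geq\tfrac{4L_y^2}{\mu_g^2\alpha^2}>\tfrac{d_1}{d_0}$, and then invoke Proposition~\ref{prop:inequality-general}. Your shift $\Omega_k=\ell(\GSO^k)-\ell^\star$ to ensure nonnegativity and your remarks on finiteness of the initial data are small bookkeeping refinements the paper omits, but the argument is otherwise identical.
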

\begin{proof}
    Denote \\
    $\Theta^k = \|\GSO^k - \GSO^{k-1}\|^2$, $\Gamma^k = \|\bm{y}^k - y^*(\GSO^{k-1})\|^2$, $\Omega^k = l(\GSO^k)$, $c_0 = \frac{1}{2\gamma} - \frac{L_{\ell}}{2}$, $c_1 = \frac{\gamma}{2}L^2$, $d_0 = \frac{\alpha\mu_g}{2}$, $d_1 = L_y^2(\frac{2}{\mu_g\alpha} - 1)$. \\
    Since $\gamma < \frac{3}{4L_{\ell}}$,
    \begin{equation*}
        \begin{aligned}
            \frac{c_0}{c_1} &= \frac{1 - \gamma L_{\ell}}{\gamma^2 L^2} \geq \frac{1}{4\gamma^2L^2} \geq \frac{4L_y^2}{\mu_g^2\alpha^2} > L_y^2\frac{4 - 2\mu_g\alpha}{\alpha^2\mu_g^2} = \frac{d_1}{d_0}.
        \end{aligned}
    \end{equation*}
    Applying Proposition \ref{prop:inequality-general} and Proposition \ref{prop:inequality-recursion} gives the results.
\end{proof}

\begin{Prop}\label{prop:upperbund-innerproduct}
        With the stepsizes in Proposition \ref{prop:inequa-deltax-deltay}, we have
    \begin{equation}\label{uperbound_innerproduct}
        \frac{1}{K}\sum_{k = 0}^K \langle \GSO^k - \GSO^{k+1}, \widehat{\grd}\Phi(\GSO^k, \bm{y}^{k+1})\rangle \leq \mathcal{O}(\frac{1}{K})
    \end{equation}
\end{Prop}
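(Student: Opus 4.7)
The plan is to express each summand in terms of the exact hyper-gradient $\grd\ell(\GSO^k)$ plus a residual driven by the inexactness $\bm{y}^{k+1}-\bar{\bm{y}}^k$, and then appeal to Proposition~\ref{prop:inequa-deltax-deltay}, which (after multiplying through by $K$) already supplies $\sum_{k=0}^{K}\|\GSO^{k+1}-\GSO^k\|^{2}=\mathcal{O}(1)$ and $\sum_{k=0}^{K}\|\bm{y}^{k+1}-\bar{\bm{y}}^{k}\|^{2}=\mathcal{O}(1)$. Concretely, I would write each summand as $A_k+B_k$, with
\[
A_k:=\langle \GSO^k-\GSO^{k+1},\grd\ell(\GSO^k)\rangle,\quad B_k:=\langle \GSO^k-\GSO^{k+1},\widehat{\grd}\Phi(\GSO^k,\bm{y}^{k+1})-\widehat{\grd}\Phi(\GSO^k,\bar{\bm{y}}^k)\rangle,
\]
exploiting the identity $\grd\ell(\GSO^k)=\widehat{\grd}\Phi(\GSO^k,\bar{\bm{y}}^k)$ that underlies~\eqref{eq:grad_ideal}.

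For $A_k$, the $L_\ell$-smoothness of $\ell$ from Lemma~\ref{lemma: property-l} (the descent lemma) yields
\[
A_k \leq \ell(\GSO^k)-\ell(\GSO^{k+1})+\frac{L_\ell}{2}\|\GSO^{k+1}-\GSO^k\|^{2},
\]
so that summing over $k=0,\dots,K$ telescopes the function values to $\ell(\GSO^0)-\ell(\GSO^{K+1})$, which is $\mathcal{O}(1)$ since $\ell$ is continuous on the compact feasible set $\mathcal{S}$. For $B_k$, Lemma~\ref{lemma: property-hatphi} gives $\|\widehat{\grd}\Phi(\GSO^k,\bm{y}^{k+1})-\widehat{\grd}\Phi(\GSO^k,\bar{\bm{y}}^k)\|\leq L\|\bm{y}^{k+1}-\bar{\bm{y}}^k\|$, so Cauchy--Schwarz followed by Young's inequality produces
\[
B_k \leq \frac{L^2}{2}\|\bm{y}^{k+1}-\bar{\bm{y}}^k\|^{2}+\frac{1}{2}\|\GSO^k-\GSO^{k+1}\|^{2}.
\]

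Adding $A_k+B_k$ and summing over $k=0,\dots,K$, the telescoped function gap contributes $\mathcal{O}(1)$ and the two quadratic sums each contribute $\mathcal{O}(1)$ by Proposition~\ref{prop:inequa-deltax-deltay}; dividing by $K$ then delivers the claimed $\mathcal{O}(1/K)$ bound. The only point that is not an immediate corollary of the earlier lemmas is the use of a single Lipschitz constant $L$ along the whole trajectory: Lemma~\ref{lemma: property-hatphi} states the bound only on balls of a fixed radius $B$, so I would invoke the uniform bound $\|\bm{y}^k\|\leq B$ already derived inside the proof of Proposition~\ref{prop:inequality-recursion}. That boundedness is precisely what the step size $\alpha=(1-c)/(1+c)^2$ was engineered to guarantee, after which the remainder of the argument is routine.
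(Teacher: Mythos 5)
Your proposal is correct and follows essentially the same route as the paper: both start from the $L_\ell$-descent lemma applied to $\ell$, split the inner product into the exact-gradient part (which telescopes in $\ell$) and the inexactness part bounded via $\grd\ell(\GSO^k)=\widehat{\grd}\Phi(\GSO^k,\bar{\bm{y}}^k)$, Lemma~\ref{lemma: property-hatphi}, and Young's inequality, and then invoke Proposition~\ref{prop:inequa-deltax-deltay} on the two summed quadratic terms. Your explicit remark that the uniform Lipschitz constant $L$ requires the boundedness of $\{\bm{y}^k\}$ established in the proof of Proposition~\ref{prop:inequality-recursion} is exactly the caveat the paper relies on as well.
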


\begin{proof}
\begin{equation*}
    \begin{aligned}
        &l(\GSO^{k+1}) \\
        \leq & l(\GSO^{k}) + \langle\nabla l(\GSO^{k}),\GSO^{k+1}-\GSO^{k}\rangle +(L_{\ell}/2)\|\GSO^{k+1}-\GSO^{k}\|^{2} \\
        = & l(\GSO^{k}) +\langle\nabla l(\GSO^{k})-\widehat{\grd}\Phi\left(\GSO^{k},\bm{y}^{k+1}\right), \GSO^{k+1}-\GSO^{k}\rangle \\
        & +\langle\widehat{\grd}\Phi(\GSO,\bm{y}^{k+1}),\GSO^{k+1}-\GSO^{k}\rangle +(L_{\ell}/2)\|\GSO^{k+1}-\GSO^{k}\|^{2} \\
         \leq & l(\GSO^{k}) + \frac{1}{2}\|\nabla l(\GSO^{k})-\widehat{\grd}\Phi\left(\GSO^{k},\bm{y}^{k+1}\right)\|^2 \\
        &+ \frac{1}{2}\|\GSO^{k+1}-\GSO^{k}\|^2 + \langle\widehat{\grd}\Phi(\GSO^k,\bm{y}^{k+1}),\GSO^{k+1}-\GSO^{k}\rangle \\
        &+(L_{\ell}/2)\|\GSO^{k+1}-\GSO^{k}\|^{2} \\
         \leq & l(\GSO^{k}) + \frac{L^2}{2}\|\bm{y}^{k+1}-\bar{\bm{y}}^k\|^2 + \frac{1+L_{\ell}}{2}\|\GSO^{k+1}-\GSO^{k}\|^2 \\
        &+ \langle\widehat{\grd}\Phi(\GSO^k,\bm{y}^{k+1}),\GSO^{k+1}-\GSO^{k}\rangle 
            \end{aligned}
\end{equation*}
\begin{equation*}
    \begin{aligned}
        \Rightarrow &\langle\widehat{\grd}\Phi(\GSO^k,\bm{y}^{k+1}),\GSO^{k} - \GSO^{k+1}\rangle \leq l(\GSO^{k}) - l(\GSO^{k+1})\\
        &+ \frac{L^2}{2}\|\bm{y}^{k+1}-\bar{\bm{y}}^k\|^2 + \frac{1+L_{\ell}}{2}\|\GSO^{k+1}-\GSO^{k}\|^2.
    \end{aligned}
\end{equation*}
Hence, 
\begin{equation*}
\begin{aligned}
        & \frac{1}{K}\sum_{k = 0}^K \langle \GSO^k - \GSO^{k+1}, \widehat{\grd}\Phi\left(\GSO^{k},\bm{y}^{k+1}\right)\rangle \\
    \leq & \frac{(l(\GSO^0) - l(\GSO^*))}{K} + \frac{L^2}{2}\frac{1}{K}\sum_{k = 0}^K\|\bm{y}^{k+1}-\bar{\bm{y}}^k\|^2\\
    & + \frac{1+L_{\ell}}{2}\frac{1}{K}\sum_{k = 0}^K\|\GSO^{k+1}-\GSO^{k}\|^2 \\
    \leq & \mathcal{O}(\frac{1}{K})
\end{aligned}
\end{equation*}
Here, we apply Proposition \ref{prop:inequa-deltax-deltay} in the last inequality.
\end{proof}

Before going deep in the final part of the proof of Theorem \ref{thm:ttgd}, let us introduce some new notations from optimization theory that would be useful in the non-convex analysis.
\begin{Def}\label{def:moreau-env}
    Fix $\rho > 0$, define the Moreau envelop and proximal map as
    \begin{equation*}
        \begin{aligned}
            &\phi_{1/\rho}(\bm{Z}):=\min_{\GSO\in \mathcal{S}}\{\ell(\GSO)+(\rho/2)\|\GSO - \bm{Z}\|^2\} \\
            &\hat{x}(\bm{Z}) := \argmin_{\GSO\in \mathcal{S}}\{\ell(\GSO)+(\rho/2)\|\GSO - \bm{Z}\|^2\}
        \end{aligned}
    \end{equation*}
\end{Def}
The proof of Lemma 3.8 in \cite{hong2023two} gives the following properties of Moreau envelop and proximal map  in our TTGA update.
\begin{Lemma}
    Set $\rho > \max\{0, -L_{\ell}\}$, we have
        \begin{equation}\label{property-mv1}
        \begin{aligned}
            \phi_{1/\rho}(\GSO^{k+1}) \leq &\phi_{1/\rho}(\GSO^k) + \frac{3}{2}\rho\|\GSO^{k+1} - \GSO^k\|^2 \\
            &+ \rho\gamma \langle \hat{x}(\GSO^k) - \GSO^k, \widehat{\grd}\Phi(\GSO^k, \bm{y}^{k+1})\rangle \\
            &+ \rho\gamma\langle \GSO^k - \GSO^{k+1}, \widehat{\grd}\Phi(\GSO^k, \bm{y}^{k+1})\rangle
        \end{aligned}
        \end{equation}
        and
        \begin{equation}\label{property-mv2}
        \begin{aligned}
            -\frac{\mu_l+\rho}{2}\|\hat{x}(\GSO^k) - \GSO^k\|^2 \geq \langle\nabla l(\GSO^k), \hat{x}(\GSO^k) - \GSO^k\rangle
        \end{aligned}
    \end{equation}
\end{Lemma}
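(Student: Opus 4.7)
The plan is to establish the two inequalities separately; both are standard consequences of the first-order optimality of the proximal point $\hat{x}(\GSO^k)$ (Definition~\ref{def:moreau-env}), the variational inequality characterizing the projection step \eqref{eq:ttgd_W}, and the $L_\ell$-weak convexity of $\ell$ recorded after Lemma~\ref{lemma: property-l}. Since the lemma is of the ``helper'' kind, no extra structural assumption beyond those already stated is needed.

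For \eqref{property-mv2}, I would start from the definition of $\hat{x}(\GSO^k)$, which together with the feasibility of $\GSO^k \in \cal S$ as a trial point gives
\[
\ell(\hat{x}(\GSO^k)) + \tfrac{\rho}{2}\|\hat{x}(\GSO^k) - \GSO^k\|^2 \leq \ell(\GSO^k).
\]
The $L_\ell$-weak convexity of $\ell$, applied between $\GSO^k$ and $\hat{x}(\GSO^k)$, yields
\[
\ell(\hat{x}(\GSO^k)) \geq \ell(\GSO^k) + \langle \nabla \ell(\GSO^k), \hat{x}(\GSO^k) - \GSO^k\rangle - \tfrac{L_\ell}{2}\|\hat{x}(\GSO^k) - \GSO^k\|^2.
\]
Adding the two displays and identifying $\mu_l = -L_\ell$ (the weak-convexity constant in the signed-$\mu$ convention) produces \eqref{property-mv2} immediately.

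For \eqref{property-mv1}, the idea is to test the minimization defining $\phi_{1/\rho}(\GSO^{k+1})$ against the feasible point $\hat{x}(\GSO^k) \in \cal S$:
\[
\phi_{1/\rho}(\GSO^{k+1}) \leq \ell(\hat{x}(\GSO^k)) + \tfrac{\rho}{2}\|\hat{x}(\GSO^k) - \GSO^{k+1}\|^2,
\]
and then expand the quadratic via the identity
\[
\|\hat{x}(\GSO^k) - \GSO^{k+1}\|^2 = \|\hat{x}(\GSO^k) - \GSO^k\|^2 + 2\langle \hat{x}(\GSO^k) - \GSO^k, \GSO^k - \GSO^{k+1}\rangle + \|\GSO^k - \GSO^{k+1}\|^2.
\]
The first two summands recombine into $\phi_{1/\rho}(\GSO^k)$, leaving the cross term $\rho\langle \hat{x}(\GSO^k) - \GSO^k, \GSO^k - \GSO^{k+1}\rangle$ to be controlled. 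Applying the variational inequality for the projection step \eqref{eq:ttgd_W} with test point $\bm Z = \hat{x}(\GSO^k) \in \cal S$,
\[
\langle (\GSO^k - \gamma \widehat{\grd}\Phi(\GSO^k,\bm y^{k+1})) - \GSO^{k+1}, \hat{x}(\GSO^k) - \GSO^{k+1}\rangle \leq 0,
\]
and splitting $\hat{x}(\GSO^k) - \GSO^{k+1} = (\hat{x}(\GSO^k) - \GSO^k) + (\GSO^k - \GSO^{k+1})$ converts the cross term into the two inner products $\rho\gamma\langle \hat{x}(\GSO^k) - \GSO^k,\widehat{\grd}\Phi(\GSO^k,\bm y^{k+1})\rangle$ and $\rho\gamma\langle \GSO^k - \GSO^{k+1},\widehat{\grd}\Phi(\GSO^k,\bm y^{k+1})\rangle$ that appear on the right-hand side of \eqref{property-mv1}, together with a leftover quadratic of the form $-\rho\|\GSO^k-\GSO^{k+1}\|^2$. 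Combining with the $\tfrac{\rho}{2}\|\GSO^k-\GSO^{k+1}\|^2$ from the expansion and trivially bounding the remainder by $\tfrac{3}{2}\rho\|\GSO^{k+1}-\GSO^k\|^2$ (the inequality $-\tfrac{\rho}{2}x \leq \tfrac{3}{2}\rho x$ holds for $x \geq 0$) delivers \eqref{property-mv1} in the stated form.

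The main obstacle I anticipate is careful sign and constant bookkeeping in the second part: when the projection variational inequality is chained with the expansion of $\|\hat x(\GSO^k)-\GSO^{k+1}\|^2$, one must arrange the algebra so that $\widehat{\grd}\Phi(\GSO^k,\bm y^{k+1})$ appears on both inner-product slots with identical $\rho\gamma$ weights and so that the residual quadratic in $\|\GSO^{k+1}-\GSO^k\|$ has a nonnegative coefficient matching the loose $\tfrac{3}{2}\rho$ stated in the lemma. Once \eqref{property-mv1} and \eqref{property-mv2} are in hand, they will be telescoped against the tracking-error bounds of Propositions~\ref{prop:inequa-deltax-deltay} and \ref{prop:upperbund-innerproduct} to conclude $\tfrac{1}{K}\sum_k \|\hat{x}(\GSO^k)-\GSO^k\|^2 = \mathcal{O}(1/K)$, from which the projected-gradient stationarity rate of Theorem~\ref{thm:ttgd} follows by the standard equivalence between the prox-residual $\|\GSO^k-\hat{x}(\GSO^k)\|$ and $\|\gamma^{-1}(\GSO^k-{\sf Proj}_{\cal S}(\GSO^k-\gamma\nabla\ell(\GSO^k)))\|$ for smooth objectives with comparable $\rho$ and $1/\gamma$.
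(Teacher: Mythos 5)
Your proposal is correct, and it is actually more self-contained than the paper, which does not prove this lemma at all but simply defers to the proof of Lemma 3.8 in \cite{hong2023two}. Your derivation of \eqref{property-mv2} (testing the prox subproblem at $\GSO^k$ and adding the $L_\ell$-weak-convexity inequality) and of \eqref{property-mv1} (testing $\phi_{1/\rho}(\GSO^{k+1})$ at the feasible point $\hat{x}(\GSO^k)$, expanding the square, and controlling the cross term via the variational inequality $\langle \GSO^k - \gamma\widehat{\grd}\Phi(\GSO^k,\bm{y}^{k+1}) - \GSO^{k+1}, \hat{x}(\GSO^k)-\GSO^{k+1}\rangle \leq 0$ with the split $\hat{x}(\GSO^k)-\GSO^{k+1} = (\hat{x}(\GSO^k)-\GSO^k)+(\GSO^k-\GSO^{k+1})$) is exactly the standard Moreau-envelope descent argument and checks out line by line; in fact it yields the sharper coefficient $-\tfrac{\rho}{2}\|\GSO^{k+1}-\GSO^k\|^2$, so the stated $+\tfrac{3}{2}\rho$ bound follows a fortiori. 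Your identification $\mu_l = -L_\ell$ is the only reading that makes \eqref{property-mv2} consistent with the weak-convexity statement recorded after Lemma~\ref{lemma: property-l} (the paper never defines $\mu_l$, and its condition $\rho > \max\{0,-L_\ell\}$ is evidently a typo for $\rho > \max\{0,-\mu_l\} = L_\ell$, which is what the subsequent use of $\rho+\mu_l > 0$ in Theorem~\ref{thm:proximal-map-convergency rate} requires); your proof of the inequality itself does not depend on that condition, so this is a defect of the paper's statement rather than of your argument.
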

\begin{Theorem}\label{thm:proximal-map-convergency rate}
   With the fixed stepsizes in Proposition \ref{prop:inequa-deltax-deltay} and $\rho > \max\{0,-L_{\ell}\}$, we have
\begin{equation*}
\frac{1}{K}\sum_{k = 0}^K \|\hat{x}(\GSO^k) - \GSO^k\|^2 \leq \mathcal{O}(\frac{1}{K})
\end{equation*}
\end{Theorem}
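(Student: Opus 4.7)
The plan is to turn the descent-like inequality \eqref{property-mv1} into a genuine descent for the Moreau envelope $\phi_{1/\rho}$ in which $\|\hat{x}(\GSO^k) - \GSO^k\|^2$ appears with a strictly negative coefficient, then to sum, telescope, and invoke Propositions~\ref{prop:inequa-deltax-deltay} and~\ref{prop:upperbund-innerproduct} to control every residual term on the right-hand side.

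Concretely, I would first decompose $\widehat{\grd}\Phi(\GSO^k, \bm{y}^{k+1}) = \grd \ell(\GSO^k) + e^k$, where $e^k := \widehat{\grd}\Phi(\GSO^k, \bm{y}^{k+1}) - \widehat{\grd}\Phi(\GSO^k, \bar{\bm{y}}^k)$ and I have used $\grd \ell(\GSO^k) = \widehat{\grd}\Phi(\GSO^k, \bar{\bm{y}}^k)$. The Lipschitz continuity of $\widehat{\grd}\Phi(\GSO^k, \cdot)$ from Lemma~\ref{lemma: property-hatphi} (applicable because $\{\bm{y}^k\}$ is bounded by the constant $B$ established in Proposition~\ref{prop:inequality-recursion}) gives $\|e^k\| \leq L \|\bm{y}^{k+1} - \bar{\bm{y}}^k\|$. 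Substituting the split into the first inner product in \eqref{property-mv1} and applying \eqref{property-mv2} to the $\grd \ell(\GSO^k)$ part yields a term $-\tfrac{\rho\gamma(\mu_l+\rho)}{2}\|\hat{x}(\GSO^k) - \GSO^k\|^2$. The residual cross term $\rho\gamma \langle \hat{x}(\GSO^k) - \GSO^k, e^k\rangle$ is dispatched by Young's inequality, absorbing half of the negative quadratic and leaving an extra $\mathcal{O}(\|\bm{y}^{k+1} - \bar{\bm{y}}^k\|^2)$ contribution. After rearranging, I obtain a schematic inequality
\[
c_1 \|\hat{x}(\GSO^k) - \GSO^k\|^2 \leq \phi_{1/\rho}(\GSO^k) - \phi_{1/\rho}(\GSO^{k+1}) + c_2 \|\GSO^{k+1} - \GSO^k\|^2 + c_3 \|\bm{y}^{k+1} - \bar{\bm{y}}^k\|^2 + c_4 \langle \GSO^k - \GSO^{k+1}, \widehat{\grd}\Phi(\GSO^k, \bm{y}^{k+1}) \rangle,
\]
with strictly positive constants $c_1,\ldots,c_4$ depending only on $\rho, \gamma, L, \mu_l$. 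Summing from $k=0$ to $K-1$, the Moreau envelope telescopes to a finite quantity because $\phi_{1/\rho}$ is bounded below on the compact feasible set $\mathcal{S}$; the sums $\sum \|\GSO^{k+1}-\GSO^k\|^2$ and $\sum \|\bm{y}^{k+1}-\bar{\bm{y}}^k\|^2$ are each $\mathcal{O}(1)$ by Proposition~\ref{prop:inequa-deltax-deltay}; and the inner-product sum is $\mathcal{O}(1)$ by Proposition~\ref{prop:upperbund-innerproduct}. Dividing through by $K$ delivers the claimed $\mathcal{O}(1/K)$ rate.

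The main obstacle, as far as I can see, is the constant bookkeeping: the choice $\rho > \max\{0, -L_\ell\}$ (together with the choices of $\alpha$ and $\gamma$ fixed in Proposition~\ref{prop:inequa-deltax-deltay}) must be shown to guarantee $\mu_l + \rho > 0$ so that \eqref{property-mv2} contributes a genuinely negative quadratic, and the Young-splitting parameter must be tuned so that $c_1 > 0$ survives the cancellation. These are clean algebraic checks that closely mirror Lemma~3.8 in \cite{hong2023two}; the only genuinely new piece is substituting the deterministic Lipschitz bound $\|e^k\| \leq L\|\bm{y}^{k+1} - \bar{\bm{y}}^k\|$ (Lemma~\ref{lemma: property-hatphi}) and the deterministic $\mathcal{O}(1/K)$ bounds from Proposition~\ref{prop:inequa-deltax-deltay} in place of stochastic-oracle variance terms.
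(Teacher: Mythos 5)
Your proposal is correct and follows essentially the same route as the paper: decompose $\widehat{\grd}\Phi(\GSO^k,\bm{y}^{k+1})$ into $\grd\ell(\GSO^k)$ plus an error controlled by $L\|\bm{y}^{k+1}-\bar{\bm{y}}^k\|$, apply \eqref{property-mv2} for the negative quadratic, dispatch the cross term by Young's inequality, substitute into \eqref{property-mv1}, and telescope using Propositions~\ref{prop:inequa-deltax-deltay} and~\ref{prop:upperbund-innerproduct}. The constant bookkeeping you flag is exactly what the paper carries out, with the Young split chosen so that a $\frac{\rho+\mu_l}{4}\|\hat{x}(\GSO^k)-\GSO^k\|^2$ margin survives.
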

\begin{proof}
    Note that
    \begin{equation*}
        \begin{aligned}
            & \langle \hat{x}(\GSO^k) - \GSO^k, \widehat{\grd}\Phi(\GSO^k, \bm{y}^{k+1})\rangle = \langle \hat{x}(\GSO^k) - \GSO^k, \nabla l(\GSO^k)\rangle \\
            &\ \ + \langle \hat{x}(\GSO^k) - \GSO^k, \widehat{\grd}\Phi(\GSO^k, \bm{y}^{k+1}) - \nabla l(\GSO^k)\rangle \\
            \leq &
             \langle \hat{x}(\GSO^k) - \GSO^k, \nabla l(\GSO^k)\rangle + \frac{1}{\rho + \mu_l}\|\widehat{\grd}\Phi(\GSO^k, \bm{y}^{k+1}) - \nabla l(\GSO^k)\|^2 \\
             & \ \ + \frac{\rho + \mu_l}{4}\|\hat{x}(\GSO^k) - \GSO^k\|^2\\
             \leq &
             \frac{L^2}{\rho + \mu_l}\|\bm{y}^{k+1} - \bar{\bm{y}}^k\|^2 - \frac{\rho + \mu_l}{4}\|\hat{x}(\GSO^k) - \GSO^k\|^2.
        \end{aligned}
    \end{equation*}
    Here, the last inequality is due to \eqref{property-mv2}. \\
    Combining the above result with \eqref{property-mv1} gives
    \begin{equation*}
        \begin{aligned}
            \phi_{1/\rho}(\GSO^{k+1}) \leq &\phi_{1/\rho}(\GSO^k) + \frac{3}{2}\rho\|\GSO^{k+1} - \GSO^k\|^2\\
            &+ \frac{\rho\gamma L^2}{\rho + \mu_l}\|\bm{y}^{k+1} - \bar{\bm{y}}^k\|^2  \\
            &+ \gamma\rho\left\langle \GSO^k - \GSO^{k+1}, \widehat{\grd}\Phi\left(\GSO^{k},\bm{y}^{k+1}\right)\right\rangle \\
            &- \frac{\rho\gamma(\rho + \mu_l)}{4}\|\hat{x}(\GSO^k) - \GSO^k\|^2 
        \end{aligned}
    \end{equation*}
        \begin{equation*}
        \begin{aligned}
             \Rightarrow & \frac{1}{K}\sum_{k = 0}^K\|\hat{x}(\GSO^k) - \GSO^k\|^2 \leq  \frac{\phi_{1/\rho}(\GSO^{0})-\Phi_{1/\rho}(\GSO^{K+1})}{K} \\
            & + \frac{3\rho}{2} \frac{1}{K}\sum_{k = 0}^K\|\GSO^{k+1}-\GSO^k\|^2 + \frac{\rho\gamma L^2}{\rho + \mu_l}\frac{1}{K}\sum_{k = 0}^K\|\bm{y}^{k+1} - \bar{\bm{y}}^k\|^2 \\
            & + \gamma\rho\frac{1}{K}\sum_{k = 0}^K\langle \GSO^k - \GSO^{k+1}, \widehat{\grd}\Phi\left(\GSO^{k},\bm{y}^{k+1}\right)\rangle \\
            & \leq \mathcal{O}(\frac{1}{K}).
        \end{aligned}
    \end{equation*}
The last inequality is due to that each components are in the order of $\mathcal{O}(\frac{1}{K})$ (Proposition \ref{prop:inequa-deltax-deltay} \& \ref{prop:upperbund-innerproduct}).
\end{proof}

Apart from the error bound characterized by proximal mapping, we can derive the convergence rate for the error bound given by the neighbouring updates from gradient descent, which is formally stated in the following theorem.
\begin{Theorem}
With the fixed stepsizes in Proposition \ref{prop:inequa-deltax-deltay},
    \begin{equation}
        \begin{aligned}
\min_{k=1,...,K} \| \gamma^{-1} ( \GSO^k - {\sf Proj}_{\cal S} ( \GSO^k - \gamma \grd \ell( \GSO^k ) ) ) \|^2 \leq {\cal O}(\frac{1}{K})
        \end{aligned}
    \end{equation}
\end{Theorem}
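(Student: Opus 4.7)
The plan is to bound the target projected-gradient stationarity measure by two quantities already controlled in Proposition~\ref{prop:inequa-deltax-deltay}, namely the iterate gap $\|\GSO^k-\GSO^{k-1}\|^2$ and the NE approximation gap $\|\bm{y}^k-\bar{\bm{y}}^{k-1}\|^2$, both of which average to ${\cal O}(1/K)$. To this end I introduce the ``idealized'' iterate $\tilde{\GSO}^{k+1}:={\sf Proj}_{\cal S}(\GSO^k-\gamma\grd\ell(\GSO^k))$ that \eqref{eq:ttgd_W} would produce if the exact hyper-gradient $\grd\ell(\GSO^k)=\widehat{\grd}\Phi(\GSO^k,\bar{\bm{y}}^k)$ were used; the actual TTGD iterate $\GSO^{k+1}$ instead plugs in $\widehat{\grd}\Phi(\GSO^k,\bm{y}^{k+1})$. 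The target quantity is $\gamma^{-2}\|\GSO^k-\tilde{\GSO}^{k+1}\|^2$, so it suffices to show that $\tilde{\GSO}^{k+1}$ and $\GSO^{k+1}$ are close enough for the stationarity measure to inherit the ${\cal O}(1/K)$ rate.

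First, by non-expansiveness of Euclidean projection onto the convex set ${\cal S}$ together with Lemma~\ref{lemma: property-hatphi},
\[\|\tilde{\GSO}^{k+1}-\GSO^{k+1}\|\le\gamma\|\widehat{\grd}\Phi(\GSO^k,\bar{\bm{y}}^k)-\widehat{\grd}\Phi(\GSO^k,\bm{y}^{k+1})\|\le\gamma L\,\|\bar{\bm{y}}^k-\bm{y}^{k+1}\|.\]
Using Lemma~\ref{lemma: property-hatphi} at this step is legitimate because Proposition~\ref{prop:inequality-recursion} provides a uniform bound $B$ on $\{\bm{y}^k\}$, and the NE bound derived inside Lemma~\ref{lemma:property-yne} handles $\{\bar{\bm{y}}^k\}$, so both arguments lie in the ball on which the constant $L$ is valid.

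Next, the triangle inequality $\|\GSO^k-\tilde{\GSO}^{k+1}\|\le\|\GSO^k-\GSO^{k+1}\|+\|\GSO^{k+1}-\tilde{\GSO}^{k+1}\|$ combined with $(a+b)^2\le 2a^2+2b^2$ yields
\[\gamma^{-2}\|\GSO^k-\tilde{\GSO}^{k+1}\|^2\le 2\gamma^{-2}\|\GSO^k-\GSO^{k+1}\|^2+2L^2\,\|\bar{\bm{y}}^k-\bm{y}^{k+1}\|^2.\]
Averaging over $k=1,\ldots,K$ and invoking \eqref{upperbond_deltax}--\eqref{upperbond_deltay} from Proposition~\ref{prop:inequa-deltax-deltay}, each of the two terms on the right is ${\cal O}(1/K)$ since $\gamma$ is a fixed constant. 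Hence $\frac{1}{K}\sum_{k=1}^K\gamma^{-2}\|\GSO^k-\tilde{\GSO}^{k+1}\|^2={\cal O}(1/K)$, and the desired $\min_{k}$ bound follows because the minimum is dominated by the average.

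The main difficulty is already resolved upstream: Proposition~\ref{prop:inequality-recursion} simultaneously delivers the inner-loop error contraction, a uniform trajectory bound that legitimizes the Lipschitz constant $L$ on the iterates, and the descent-type inequality; Proposition~\ref{prop:inequa-deltax-deltay} then feeds these into Proposition~\ref{prop:inequality-general} to convert the coupled recursions into ${\cal O}(1/K)$ averaged bounds. Consequently the present theorem reduces to the short epilogue above, whose only subtle point is consistency of $L$ across the two argument pairs $(\GSO^k,\bar{\bm{y}}^k)$ and $(\GSO^k,\bm{y}^{k+1})$, which the trajectory bound handles automatically.
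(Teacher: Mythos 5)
Your proposal is correct and follows essentially the same route as the paper's proof: both split $\GSO^k-{\sf Proj}_{\cal S}(\GSO^k-\gamma\grd\ell(\GSO^k))$ via the actual TTGD iterate $\GSO^{k+1}$, use non-expansiveness of the projection plus the Lipschitz continuity of $\widehat{\grd}\Phi(\GSO^k,\cdot)$ to reduce everything to $\|\GSO^k-\GSO^{k+1}\|^2$ and $\|\bm{y}^{k+1}-\bar{\bm{y}}^k\|^2$, and then invoke Proposition~\ref{prop:inequa-deltax-deltay} together with the min-vs-average bound. Your extra remark on why the constant $L$ is valid for both argument pairs is a point the paper leaves implicit, but otherwise the arguments coincide.
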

\begin{proof}
    \begin{equation*}
        \begin{aligned}
    &\|( \GSO^k - {\sf Proj}_{\cal S} ( \GSO^k - \gamma \grd \ell( \GSO^k ) ) ) \| \\
    \leq &\|( \GSO^k -  {\sf Proj}_{ \cal S } ( \GSO^k - \gamma \widehat{\grd} \Phi( \GSO^k , {\bm y}^{k+1} )))\|\\
    &+ \|({\sf Proj}_{ \cal S } ( \GSO^k - \gamma \widehat{\grd} \Phi( \GSO^k , {\bm y}^{k+1} )) - {\sf Proj}_{\cal S} ( \GSO^k - \gamma \grd \ell( \GSO^k ) ))\| \\
    \leq & \|( \GSO^k -  \GSO^{k+1})\| + \gamma\| \widehat{\grd} \Phi( \GSO^k , {\bm y}^{k+1} ) - \grd \ell( \GSO^k )\| \\
    \leq & \|( \GSO^k -  \GSO^{k+1})\| + \gamma L\|{\bm y}^{k+1} - \bar{\bm y}^{k}\|
        \end{aligned}
    \end{equation*}
    Hence, 
    \begin{equation*}
        \begin{aligned}
             &\| \gamma^{-1} ( \GSO^k - {\sf Proj}_{\cal S} ( \GSO^k - \gamma \grd \ell( \GSO^k ) ) ) \|^2 \\
             \leq &\frac{2}{\gamma^2}\|( \GSO^k -  \GSO^{k+1})\|^2 + 2L^2\|{\bm y}^{k+1} - \bar{\bm y}^{k}\|^2.
        \end{aligned}
    \end{equation*}
    Applying Proposition \ref{prop:inequa-deltax-deltay} implies that
    \begin{equation*}
        \begin{aligned}
            &\frac{1}{K}\sum_{k = 1}^K \| \gamma^{-1} ( \GSO^k - {\sf Proj}_{\cal S} ( \GSO^k - \gamma \grd \ell( \GSO^k ) ) ) \|^2 \leq \mathcal{O}(\frac{1}{K}) \\
            \Rightarrow &\min_{k=1,...,K} \| \gamma^{-1} ( \GSO^k - {\sf Proj}_{\cal S} ( \GSO^k - \gamma \grd \ell( \GSO^k ) ) ) \|^2 \leq {\cal O}(\frac{1}{K})
        \end{aligned}
    \end{equation*}
\end{proof}

\newpage




\end{document}